\newcommand*{\textlabel}[2]{%
  \edef\@currentlabel{#1}
  \phantomsection
  #1\label{#2}
}
\newtheoremstyle{custom}
  {3pt}
  {3pt}
  {\slshape}
  {}
  {\bfseries}
  {.}
  { }
   {}
\theoremstyle{custom}
\newtheorem{theorem}{Theorem}[section]
\newtheorem{proposition}[theorem]{Proposition}
\newtheorem{proposition/definition}[theorem]{Proposition/Definition}
\theoremstyle{definition}
\newtheorem{example}[theorem]{Example}
\newtheorem{question}[theorem]{Question}
\theoremstyle{remark}
\newtheorem{remark}[theorem]{Remark}
\newtheoremstyle{exercise}
  {3pt}
  {6pt}
  {}
  {}
  {\bfseries}
  {:}
  { }
   {}
\theoremstyle{exercise}
\newtheorem{exercise}[theorem]{Exercise}
\newtheoremstyle{exercises}
  {3pt}
  {6pt}
  {}
  {}
  {\bfseries}
  {:}
  {\newline}
   {}
\theoremstyle{exercise}
\newtheorem{exercises}[theorem]{Exercises}
\def\boxit#1{\vbox{\hrule height1pt\hbox{\vrule width1pt\kern3pt
  \vbox{\kern3pt#1\kern3pt}\kern3pt\vrule width1pt}\hrule height1pt}}
\def\trank{\text{rank}}
\def\BC{\mathbb C}
\def\BP{\mathbb P}
\def\pp#1{\mathbb P^{#1}}
\def\pp#1{{\mathbb P}^{#1}}
\def\tdim{{\rm dim}}
\def\hd{,...,}
\def\inv{{}^{-1}}
\def\cG{{\mathcal G}}
\def\cS{{\mathcal S}}
\def\cP{{\mathcal P}}
\def\ZZ{\mathbb Z}
\def\11{\mathbf 1}
\def\l{\lambda}
\def\a{\alpha}
\def\o{\omega}
\def\b{\beta}
\def\g{\gamma}
\def\s{\sigma}
\def\ot{{\mathord{ \otimes } }}
\def\op{{\mathord{\,\oplus }\,}}
\def\otc{{\mathord{\otimes\cdots\otimes} }}
\def\ra{{\mathord{\;\rightarrow\;}}}
\def\La#1{\Lambda^{#1}}
\def\op{\oplus}
\def\BZ{\Bbb Z}
\def\ep{\epsilon}
\def\op{\oplus}
\def\s{\sigma}
\def\t{\tau}
\def\a{\alpha}
\def\b{\beta}
\def\g{\gamma}
\def\l{\lambda}
\def\FS{\mathfrak  S}
\def\BP{\mathbb  P}
\def\BC{\mathbb  C}
\def\pp#1{\mathbb  P^{#1}}
\def\ep{\epsilon}
\def\ci{\mathcal  I}
\def\hd{, \hdots ,}
\def\inv{{}^{-1}}
\def\La#1{\Lambda^{#1}}
\def\pp#1{\mathbb  P^{#1}}
\def\ra{\rightarrow}
\def\tperm{\operatorname{perm}}
\def\ttrace{\operatorname{trace}}
\def\tdim{\operatorname{dim}}
\def\tmod{\operatorname{mod}}
\def\trank{\operatorname{rank}}
\def\ctimes{\times \cdots\times}
\def\be{\begin{equation}}
\def\ene{\end{equation}}
\def\trank{\mathbf{R}}
\def\G{\Gamma}
\newcommand{\Id}{\operatorname{Id}}
\newcommand{\Z}{\mathbb{Z}}
\def\Mn{M_{\langle \nnn \rangle}}\def\Mtwo{M_{\langle 2 \rangle}}\def\Mthree{M_{\langle 3\rangle}}
\def\Mn{M_{\langle \nnn \rangle}}\def\Mthree{M_{\langle 3\rangle}}
\def\Mtwo{M_{\langle 2\rangle}}\def\Mthree{M_{\langle 3\rangle}}
\def\trank{{\mathrm {rank}}}
\def\nnn{\mathbf{n}}
\DeclareMathOperator*{\argmin}{arg\,min}
\newcommand{\zfzt}{$\cS_{\BZ_4\times \BZ_3}$}
\newcommand{\twofix}{$\cS_{2fix-\BZ_3}$}
\newcommand{\lad}{$\cS_{Lader-\BZ_3}$}
\newcommand{\addtlone}{Addtl.~Dec.~\#1}
\newcommand{\addtltwo}{Addtl.~Dec.~\#2}
\newcommand{\addtlthree}{Addtl.~Dec.~\#3}
\begin{document}

\author{Grey Ballard}
\address{PO Box 7311, 
Department of Computer Science, 
Wake Forest University, 
Winston-Salem, NC 27109 }
\email{ballard@wfu.edu}

\author{Christian Ikenmeyer}
\address{Max Planck Institute for Informatics, Saarland Informatics Campus, Building E1.4, D-66123 Saarbr\"ucken, Germany}
\email{cikenmey@mpi-inf.mpg.de}
 
\author{J.M. Landsberg}
\address{
Department of Mathematics\\
Texas A\&M University\\
Mailstop 3368\\
College Station, TX 77843-3368, USA}
\email{jml@math.tamu.edu}
\author{Nick Ryder}
\address{Department of Mathematics\\
UC Berkeley}
\email{nick.ryder@berkeley.edu}

\thanks{Landsberg    supported by   NSF grant  DMS-1405348. Ballard supported in part by an appointment to the Sandia National Laboratories Truman Fellowship in National Security Science and Engineering, sponsored by Sandia Corporation (a wholly owned subsidiary of Lockheed Martin Corporation) as Operator of Sandia National Laboratories under its U.S. Department of Energy Contract No. DE-AC04-94AL85000.}
\title[Geometry and matrix multiplication decompositions II]{The geometry of rank decompositions of matrix multiplication II: $3\times 3$ matrices}
\keywords{matrix multiplication complexity, alternating least squares, MSC 68Q17, 14L30, 15A69}
\begin{abstract}
This is the second in a series of papers on rank decompositions of the matrix multiplication tensor.  
We present  new rank $23$ decompositions for the
$3\times 3$ matrix multiplication tensor $\Mthree$. All our decompositions
have symmetry groups that include the standard cyclic permutation
of factors but otherwise exhibit a range of behavior.
One of them has 11 cubes as summands and admits an unexpected symmetry group of order 12.
 
We establish basic information regarding symmetry groups of
decompositions and outline two approaches for finding
new rank decompositions of $\Mn$ for larger $\nnn$.
\end{abstract}
\maketitle

\section{Introduction}
This is the second in a planned series of papers
on the geometry of  rank decompositions of the matrix multiplication tensor
$\Mn\in \BC^{\nnn^2}\ot \BC^{\nnn^2}\ot\BC^{\nnn^2}$. Our goal is to obtain
new rank decompositions of $\Mn$ by exploiting symmetry.
For a tensor $T\in \BC^{m}\ot  \BC^{m}\ot  \BC^{m}$, the {\it rank} of $T$
is the smallest $r$ such that $T=\sum_{j=1}^r a_{ j}\ot b_{ j}\ot c_j$, with $a_j,b_j,c_j\in \BC^m$.
The rank of $\Mn$ is a standard complexity measure of matrix multiplication, in particular,
it governs the total number of arithmetic operations needed to multiply two matrices.

In this paper we present rank 23 decompositions of $\Mthree$ that have large symmetry groups,
in particular all admit the standard cyclic $\ZZ_3$-symmetry of permuting the three tensor factors.
Although many rank 23 decompositions of $\Mthree$ are known \cite{Laderman,MR837607},
none of them was known to admit the standard cyclic $\ZZ_3$-symmetry.
 
 We describe  techniques to  determine symmetry groups
 of decompositions and to determine if two decompositions
 are in the same {\it family}, as defined in \S\ref{symgpsect}.
  We also develop a framework for using representation theory to write
 down new rank decompositions for $\Mn$ for all $\nnn$.
 Similar frameworks  are  also being developed
 implicitly and explicitly in \cite{DBLP:journals/corr/Burichenko15}
 and \cite{2016arXiv161201527G}.

As discussed below, decompositions come in families. 
  DeGroote \cite{MR509013} has shown, in contrast to $\Mthree$,  the family generated by Strassen's decomposition  is
the unique family of rank seven decompositions of $\Mtwo$.  Unlike $\Mtwo$, it is still not known
if the tensor rank of $\Mthree$ is indeed $23$. The best lower bound on the rank is $16$  \cite{MR3633766}. There have been substantial 
unsuccessful efforts to find smaller decompositions
by numerical methods. 
While some researchers have taken this as evidence that $23$ might be optimal, 
it just might be the case that   rank   $22$ (or smaller) decompositions might be much
rarer than border rank $22$ decompositions, and so
using numerical search methods, and 
given initial search points, 
 when they converge,
with probability nearly one  
would converge to border rank decompositions.

In this paper we focus on decompositions with   
{\it standard cyclic symmetry}: viewed as a trilinear map, matrix multiplication is $(X,Y,Z)\mapsto \ttrace(XYZ)$, where $X,Y,Z$ are $\nnn\times\nnn$ matrices.
Since $\ttrace(XYZ)=\ttrace(YZX)$, the matrix multiplication tensor $\Mn\in 
\BC^{\nnn^2}\ot \BC^{\nnn^2}\ot\BC^{\nnn^2}=Mat_{n\times n}^{\ot 3}$ has a $\BZ_3$ symmetry by
cyclically permuting the three factors. If one applies this cyclic permutation to a tensor decomposition of $\Mn$, it is transformed to
another decomposition of $\Mn$. When a decomposition is transformed to
itself, we say the decomposition has {\it cyclic symmetry}  or    
  {\it   $\BZ_3^{std}$-invariance}.

 In \S\ref{niceexs} we present three of our examples. 
 We explain the search methods used in \S\ref{greysect}. We then   discuss
  techniques for determining symmetry groups  in \S\ref{symgpsect}
 and determine the  symmetry groups of our decompositions   in \S\ref{oursymgps}.  Our searches sometimes found equivalent decompositions
 but in different coordinates, and the same techniques enabled 
 us to identify when two decompositions are equivalent.
 Further techniques for studying decompositions are presented in \S\ref{configsect} and \S\ref{evalsect},
 respectively in terms of configurations of points in $\pp{n-1}$ and eigenvalues. 
 In \S\ref{repsect}, we precisely describe the subspace
 of $\BZ_3$-invariant tensors in $Mat_{n\times n}^{\ot 3}$, as well as the subspaces
 invariant under other finite groups. 
  In an appendix \S\ref{extraexs}, we present additional decompositions that we found.

\subsection*{Why search for decompositions with symmetry?} 
There are many examples where the optimal decompositions (or expressions)
of tensors with symmetry have some symmetry. This is true of
$\Mtwo$ \cite{DBLP:journals/corr/ChiantiniILO16,DBLP:journals/corr/Burichenko14}, the monomial $x_1\cdots x_n$
\cite{MR3440150}, also see \cite[\S 7.1]{simonsbook},
the optimal determinantal expression of $\tperm_3$ \cite{LRpermdet}, and other cases.  In any case,   imposing the symmetry
i) reduces the size of the search space, and ii) provides a guide for constructing decompositions
through the use of \lq\lq building blocks\rq\rq .

 \subsection*{Notation and conventions} $A,B,C,U,V,W$ are vector spaces, $A^*$ is the dual vector space to $A$, $GL(A)$ denotes the group of invertible
 linear maps $A\ra A$,   $SL(A)$ the maps with determinant one, and
 $PGL(A)=GL(A)/\{\BC\Id_A \backslash 0\}$ the group of projective transformations of projective space $\BP A$.
 The action of $GL(A)$ on $A\ot A^*$ descends to an action of $PGL(A)$. 
 If $a\in A$, $[a]$ denotes the corresponding point in projective space. $\FS_d$ denotes the permutation group on $d$ elements
 and $\ZZ_d  $ denotes the cyclic group of order $d$.
 For $X\subset \BP V$, $\hat X\subset V$ denotes its pre-image under the projection map union the origin.
 $X^{(\times r)}$ denotes the quotient $X^{\times r}/\FS_r$.
 We write $SL_m=SL(\BC^m)$, $GL_m=GL(\BC^m)$, and   $H^{SL_m}\subset SL_m$  is the subgroup of diagonal matrices.
 For a matrix $a$, $a^i_j$ denotes the entry in the $i$-th row and $j$-th column.

 \subsection*{Acknowledgements} Work on this paper  began  during 
the fall 2014 semester   program
 {\it Algorithms and Complexity in Algebraic Geometry} at the 
 Simons Institute for the Theory of Computing. We thank the Institute
for bringing  us together and    making this paper possible.

  \section{Examples}\label{niceexs}
  
\subsection{A rank $23$ decomposition of $\Mthree$ with $\BZ_4\times \BZ_3$ symmetry}\label{bz4bz3decomp}
Let $\BZ_4^{a_0} \subset  GL_3\subset  GL_3^{\times 3}$
be generated by
\be\label{aodef}
{a_0}=\begin{pmatrix} 0&0&-1\\ 1&0&-1\\ 0&1& -1\end{pmatrix},
\ene
where the action on   $x\ot y\ot z$ where $x,y,z$ are $3\times 3$-matrices
is
$x\ot y\ot z\mapsto {a_0} x{a_0}\inv \ot {a_0} y{a_0}\inv\ot {a_0} z{a_0}\inv$
and let $\BZ_3^{std}$ denote the standard cyclic symmetry.

Here is the decomposition, we  call it $\cS_{\BZ_4\times \BZ_3}$:
\begin{align}
\label{43a} \Mthree =&-\begin{pmatrix} 0&0&-1\\ 1&0&-1\\ 0&1& -1\end{pmatrix}^{\ot 3}\\
\label{43c} &+
\BZ_4^{a_0}/\BZ_2^{ a_0^2}\cdot \begin{pmatrix} 0&1&0\\ 0&1&0\\ 0&0&0\end{pmatrix}^{\ot 3}\\
\label{43b} &+
\BZ_4^{a_0}\cdot \begin{pmatrix} 1&0&0\\ 0&0&0\\ 0&0&0\end{pmatrix}^{\ot 3}\\
\label{43d} &+
\BZ_4^{{a_0} } \cdot \begin{pmatrix} 0&-1&0\\ 1&-1&0 \\ 0&0&0\end{pmatrix}^{\ot 3}\\
\label{43e} &+
\BZ_3^{std}\times \BZ_4^{{a_0}}\cdot \begin{pmatrix} 0&0&0\\ 0&0&1\\ 0&0&0\end{pmatrix}
\ot \begin{pmatrix} 0&0&0\\ 0&0&0\\ -1&1&0\end{pmatrix}
\ot \begin{pmatrix} 0&0&0\\ 0&1&-1\\ 0&1&-1\end{pmatrix}.
\end{align}

The decomposition is a sum of $23$ terms, each of which is a trilinear form on
matrices. For example, the   term 
$\begin{pmatrix} 0&0&0\\ 0&0&1\\ 0&0&0\end{pmatrix}
\ot \begin{pmatrix} 0&0&0\\ 0&0&0\\ -1&1&0\end{pmatrix}
\ot \begin{pmatrix} 0&0&0\\ 0&1&-1\\ 0&1&-1\end{pmatrix}$
sends a triple of matrices $(x,y,z)$
to the number $x_{23}(y_{32}-y_{31})(z_{22}-z_{23}+z_{32}-z_{33})$.
This expresses $\Mthree$ in terms of five $\BZ_4\times \BZ_3$-orbits, of sizes $1,2,4,4,12$.

\subsection{An element of the $\cS_{\BZ_4\times \BZ_3}$ family with $a_0$ diagonalized}
It is also illuminating to diagonalize $a_0$, in other words decompose the decomposition with respect to the $\BZ_4$ action:
In the following plot each row of $3 \times 3$ matrices forms an orbit under the $\BZ_4$-action. In the first four rows are 
the eleven $3\times 3$ matrices that appear as cubes in the decomposition. In remaining 4 rows each of the four columns forms three rank one tensors by tensoring the three matrices in the column in three different orders:
1-2-3, 2-3-1, and 3-1-2. 

Each complex number in each matrix is depicted by plotting its position in the complex plane.
To help identify  the precise position, a square is drawn with vertices $(-\frac 1 2,-\frac 1 2)$, $(\frac 1 2,-\frac 1 2)$, $(-\frac 1 2,\frac 1 2)$, $(\frac 1 2,\frac 1 2)$.
To quickly identify the absolute value of a complex number they are color coded:\\
};%
\end{tikzpicture}}}
}

\end{minipage}

\subsection{A cyclic-invariant decomposition in the Laderman family}
 Let $\cS_{Laderman}$ denote the rank $23$ Laderman decomposition of $\Mthree$. 
By \cite{DBLP:journals/corr/Burichenko14}  the symmetry group of this decomposition (see \S\ref{symgpsect}) is  $\G_{\cS_{Laderman}}=(\BZ_2\times \BZ_2)\rtimes (\BZ_3\rtimes \BZ_2)\simeq \FS_4$, 
where  $\BZ_2\times \BZ_2\subset SL_3^{\times 3}$.
We found a decomposition, which we call $\cS_{Lader-\BZ_3^{std}}$ that we identified as a $\BZ_3^{std}$-invariant
member of the Laderman {\it family}  in the sense of \S\ref{symgpsect}.

Let 
$$
\t_{23}=\begin{pmatrix} 1& 0&0\\ 0&0&1\\ 0&1&0\end{pmatrix},
\ \ep_2= \begin{pmatrix} 1& 0&0\\ 0&-1&0\\ 0&0&1\end{pmatrix}, \
 \t_{13}=\begin{pmatrix} 0& 0&1\\ 0& 1&0\\ 1&0&0\end{pmatrix}, \ 
$$
 
Let $\phi ( x\ot y\ot z)=\t_{13}x\t_{13}\ot \t_{13}y\ot z\t_{13}$, and
let $\zeta(x\ot y\ot z)= \ep_2y^T \ep_2\ot \ep_2 x^T\ep_2\ot \ep_2z^T \ep_2$. Let $\pi$ be the
generator of $\BZ_3^{std}$. Let $\G$ denote  the group generated by $\pi,\phi$, and $\zeta$.

\begin{align}
\label{nlada} \Mthree&=\begin{pmatrix} 0 &0&0\\0&1&0\\ 0&0&0\end{pmatrix}^{\ot 3}
\\
\label{nladb} &+
 \G/(\BZ_3^{\pi}\rtimes \BZ_2^{\zeta})\cdot \begin{pmatrix} 0 &0&0\\0&0&0\\ 0&0&1\end{pmatrix}^{\ot 3}
\\
\label{nladc} & +\G/(\BZ_3^{\pi}\rtimes \BZ_2^{\zeta}) \cdot \begin{pmatrix} -1 & 1&0\\-1&0&0\\ 0&0&0\end{pmatrix}^{\ot 3}
\\
\label{nladd} &+
 \G /(\BZ_2^{\phi}\rtimes \BZ_2^{\zeta})\cdot  \begin{pmatrix} 0 &-1&0\\0&0&0\\ 0&0&0\end{pmatrix}
\ot \begin{pmatrix} 1 &-1&0\\ 1&-1&-1\\ 0&1&1\end{pmatrix}
\ot \begin{pmatrix} 0 &0&0\\1&0&0\\ 0&0&0\end{pmatrix}
\\
\label{nlade} &+
\G/ \BZ_3^{\pi} \cdot \begin{pmatrix} 1 &0&0\\1&0&0\\ 0&0&0\end{pmatrix}^{\ot 3}
\end{align}
These five  orbits are respectively of sizes $1,4,4,6,8$. There are five $\BZ_3^{std}$ invariant terms,
one from each of \eqref{nlada},\eqref{nladb},\eqref{nladc} and two from \eqref{nlade} because $\zeta$ preserves
$\BZ_3^{std}$-invariance. 

We originally found this  decomposition   by numerical methods. The incidence graphs
discussed in \S\ref{graphsect} gave us
evidence that it should be in the Laderman family, and then it was straightforward to find the transformation that
exchanged $\cS_{Laderman}$ and $\cS_{Lader-\BZ_3^{std}}$,   namely 
$x\ot y\ot z\mapsto \t_{12}x\ot y \ep_2\t_{12}\ot\t_{12}\ep_2 z\t_{12}$.
See \S\ref{graphsect} for more discussion.
As shown in in \cite{DBLP:journals/corr/Burichenko14}, $\G$  is isomorphic
to $\FS_4$ and these are all the symmetries of the Laderman family. 
Translated to $\cS_{Laderman}$ as presented in 
   \cite{DBLP:journals/corr/Burichenko14}, these five orbits are respectively $\{ 19\}$,
$\{20,21,22,23\}$, $\{ 4,7,12,16\}$, $\{ 1,3,6,10,11,14\}$ and 
$\{ 2,5,8,9,13,15,17,18\}$. 

\begin{remark} 
As pointed out in \cite{2017arXiv170308298S}, there are similarities between this decomposition and Strassen's. 
\end{remark}

\begin{remark}  
The   decompositions of Johnson-McLouglin \cite{MR837607} cannot have $\BZ_3$-invariant decompositions   for rank reasons:
The first space of decompositions  has five terms  (those numbered 3,12,16,22,23 in \cite{MR837607})  
where there are two matrices of rank one and one of rank greater than one, while to have any external
$\BZ_3$-symmetry, the number of such would have to be a multiple of three.
The second space has a unique matrix of rank three appearing (23b), so is ruled out for the same reason.
\end{remark}

\subsection{Decomposition  $\cS_{2fix-\BZ_3}$}
 Here is a decomposition with two $\BZ_3$-fixed points:

\begin{align}
\Mthree
&= \label{twofix_c1} \begin{pmatrix} 1 & 0 & 0 \\ 0 & 0 & 1 \\ 0 & 0 & 1 \end{pmatrix}^{\ot 3} \\
&+ \label{twofix_c2} \begin{pmatrix} 0 & 0 & 0 \\ 0 & 1 & -1 \\ 0 & 0 & 0 \end{pmatrix}^{\ot 3} \\
&+ \label{twofix_c3} \BZ_3^{std} \cdot
	\begin{pmatrix} 0 & 1 & 0 \\ 0 & 0 & 1 \\ 0 & 0 & 0 \end{pmatrix} \ot 
	\begin{pmatrix} 0 & 0 & 0 \\ 0 & 1 & -1 \\ 0 & 1 & -1 \end{pmatrix}
    \ot \begin{pmatrix} 0 & 0 & 0 \\ 1 & 0 & -1 \\ 0 & 0 & 0 \end{pmatrix}  \\ 
&+ \label{twofix_c5} \BZ_3^{std} \cdot 
	\begin{pmatrix} 0 & -1 & 1 \\ 0 & 0 & 0 \\ 0 & 0 & 0 \end{pmatrix} \ot 
	\begin{pmatrix} 0 & 0 & 0 \\ 0 & 0 & 0 \\ 0 & 0 & 1 \end{pmatrix} \ot 
	\begin{pmatrix} 0 & 0 & 0 \\ 0 & 0 & 0 \\ 1 & 0 & 0 \end{pmatrix} \\     
&+ \label{twofix_c6} \BZ_3^{std} \cdot 
	\begin{pmatrix} 1 & 0 & 0 \\ 1 & 0 & 0 \\ 0 & 0 & 0 \end{pmatrix} \ot 
	\begin{pmatrix} 0 & -1 & 1 \\ 0 & 0 & 0 \\ 0 & 0 & 0 \end{pmatrix} \ot 
	\begin{pmatrix} 0 & 0 & 0 \\ 0 & 0 & 0 \\ 0 & 1 & 0 \end{pmatrix} \\ 
&+ \label{twofix_c4} \BZ_3^{std} \cdot 
	\begin{pmatrix} 1 & 0 & 0 \\ 0 & 0 & 1 \\ 0 & 0 & 0 \end{pmatrix} \ot 
	\begin{pmatrix} 0 & 1 & 0 \\ 0 & 0 & 1 \\ 0 & 0 & 1 \end{pmatrix} \ot 
	\begin{pmatrix} 0 & 0 & 0 \\ 1 & 0 & -1 \\ 0 & 1 & -1 \end{pmatrix} \\ 
&+ \label{twofix_c7} \BZ_3^{std} \cdot 
	\begin{pmatrix} 1 & 0 & 0 \\ 0 & 0 & 0 \\ 0 & 0 & 0 \end{pmatrix} \ot 
	\begin{pmatrix} 0 & 0 & 1 \\ 0 & 0 & 1 \\ 0 & 0 & 1 \end{pmatrix} \ot 
	\begin{pmatrix} 0 & 0 & 0 \\ 0 & 0 & 0 \\ 1 & -1 & 0 \end{pmatrix} \\ 
&+ \label{twofix_c8} \BZ_3^{std} \cdot 
	\begin{pmatrix} 0 & 0 & 0 \\ 0 & 0 & 1 \\ 0 & 0 & 0 \end{pmatrix} \ot 
	\begin{pmatrix} 0 & 1 & 0 \\ 0 & 1 & 0 \\ 0 & 1 & 0 \end{pmatrix} \ot 
	\begin{pmatrix} 0 & 0 & 0 \\ -1 & 1 & 0 \\ 0 & 0 & 0 \end{pmatrix} \\ 
&+ \label{twofix_c9} \BZ_3^{std} \cdot 
	\begin{pmatrix} 0 & 0 & 0 \\ 0 & 0 & 1 \\ 0 & 0 & 1 \end{pmatrix} \ot 
	\begin{pmatrix} 1 & 0 & 0 \\ 1 & 0 & 0 \\ 1 & 0 & 0 \end{pmatrix} \ot 
	\begin{pmatrix} -1 & 1 & 0 \\ 0 & 0 & 0 \\ 0 & 0 & 0 \end{pmatrix} 
\end{align}

An interesting feature of this decomposition is that it \lq\lq nearly\rq\rq\  has a transpose-like symmetry, as discussed in later sections.

\section{Discussion on the numerical methods used}
\label{greysect}

Our techniques for discovering cyclic-invariant decompositions use numerical optimization methods that are designed to compute approximations rather than exact decompositions.
We also use heuristics in order to encourage sparsity in the solutions, and a fortunate by-product of the sparsity is that the nonzero values often tend towards a discrete set of values from which an exact decomposition can be recognized.
Our methods are based on techniques that have proved successful in discovering generic exact decompositions (those that have no noticeable symmetries) \cite{BB15,Smirnov13}; we summarize this approach in Section \ref{sec:ALS}.
Our search process can be divided into two phases.
First, as we discuss in Section \ref{sec:LOQO}, we find a dense, cyclic-invariant, approximate solution using nonlinear optimization methods.
Then, as we describe in Section \ref{sec:sparsify}, we transform the dense approximate solution to an exact cyclic-invariant decomposition using heuristics that encourage sparsity.

\subsection{Alternating Least Squares}
\label{sec:ALS}

Computing low-rank approximations of tensors is a common practice in data analysis when the data represents multi-way relationships.
The most popular algorithm for computing approximations with CANDECOMP/PARAFAC (CP) structure, i.e., rank decompositions,  is known as alternating least squares (ALS) \cite{KB09}.
In particular, in the case of the matrix multiplication tensor, the objective function of the optimization problem is given by
\begin{equation}
\label{eq:objfun}
\argmin_{X,Y,Z} \left\|\Mn-\sum_{r=1}^R x_r\ot y_r\ot z_r\right\|,
\end{equation}
where $X$, $Y$, and $Z$ are $\nnn^2\times R$ \emph{factor matrices} with $r$th columns given by $x_r$, $y_r$, and $z_r$ and
this and all  norms correspond  to the square root of the sum of squares of the entries of the tensor (or matrix).
The objective function itself is nonlinear and non-convex and cannot be solved in closed form.
However, if two of the three factor matrices are held fixed, then the resulting objective function is a linear least squares problem and can be solved using linear algebra.
Thus, ALS works by alternating over the factor matrices, holding two fixed and updating the third, and iterating until convergence.

In general, ALS iterations are performed in floating point arithmetic.
While an objective function value of 0 corresponds to a rank decomposition, with ALS we can hope for an objective function value only as small as the finite precision allows.
In the case of the matrix multiplication tensor, there are multiple pitfalls that make it difficult to find approximations that approach objective function values of 0.
The most successful technique was proposed by Smirnov and uses regularization, with objective function given by
 \begin{equation}
 \label{eq:objfunreg}
 \argmin_{X,Y,Z} \left\|\Mn-\sum_{r=1}^R x_r\ot y_r\ot z_r\right\| + \lambda \left( \|X-\tilde X\|  + \|Y-\tilde Y\|  + \|Z - \tilde Z\|  \right),
 \end{equation}
 for judicious choices of scalar $\lambda$ and matrices $\tilde X$, $\tilde Y$, and $\tilde Z$ \cite{Smirnov13}.
 We discuss effective choices for the regularization parameters in Section \ref{sec:sparsify}.
 This method works for the cases of non-square matrix multiplication and has been used to discover exact rank decompositions for many small cases \cite{BB15,Smirnov13}, but it does not encourage solutions to reflect any symmetries.

\subsection{Nonlinear Optimization for Dense Approximations}
\label{sec:LOQO}

To enforce cyclic invariance on rank decompositions, we can impose structure on the factor matrices $X$, $Y$, and $Z$.
In particular, if a cyclic-invariant approximation includes the component $x_r\ot y_r\ot z_r$, then it must also include $z_r\ot x_r\ot y_r$ and $y_r\ot z_r\ot x_r$.
This implies either that all three components appear or that $x_r=y_r=z_r$, which means the factor matrices have the following structure:
\begin{align}
X &= \begin{pmatrix} A & B & C & D \end{pmatrix} \notag \\
Y &= \begin{pmatrix} A & D & B & C \end{pmatrix} \label{eq:cistruct} \\
Z &= \begin{pmatrix} A & C & D & B \end{pmatrix}, \notag
\end{align}
where $A$ is an $\nnn^2\times P$ matrix and $B,C,D$ are $\nnn^2\times Q$ matrices with $P+3Q=R$.
With this structure, \eqref{eq:objfun} becomes
\begin{equation}
\label{eq:objfunci}
\argmin_{A,B,C,D} \left\|\Mn-\sum_{p=1}^P a_p\ot a_p\ot a_p - \sum_{q=1}^Q \left(b_q\ot c_q\ot d_q + d_q\ot b_q\ot c_q + c_q\ot d_q\ot b_q\right)\right\|.
\end{equation}
Note that the total number of variables (now spread across 4 matrices) is reduced by a factor of 3.

Again, this objective function is nonlinear and non-convex.
It is possible to use an ALS approach to drive the objective function value to 0; however, while \eqref{eq:objfunci} is linear in $B$, $C$, and $D$, it is not linear in $A$.
Thus, the optimal update for $A$, for fixed $B$, $C$, and $D$, cannot be computed in closed form.
While there are numerical optimization techniques for $A$, we were not successful in using an ALS approach to drive the objective function value close to zero.

Instead, we used generic nonlinear optimization software to search for cyclic-invariant approximations.
In particular, we used the LOQO software package \cite{Vanderbei99}, which relies on the AMPL \cite{FGK02} modeling language to specify the optimization problem.
In order to find solutions, we try all possible values of $P$ and $Q$, use multiple random starting points, and constrain the variable entries to be no greater than 1 in absolute value.
For fixed $R$, there are $\lfloor R/3 \rfloor$ possible values for $P$ and $Q$.
Multiple starting points are required because the objective function is non-convex: numerical optimization techniques are sensitive to starting points in this case, with approximations often getting stuck at local minima.
Constraining the variables to be no greater than 1 in absolute value is a technique to avoid converging numerically to border rank decompositions, in which case some variable values must grow in magnitude to continue to improve the objective function value.
Driving the objective function value to zero with bounded variable values ensures that the approximation corresponds to a rank decomposition.

When we are successful, we obtain matrices $A,B,C,D$ that correspond to an objective function value very close to zero.
Thus, the approximation has the cyclic invariance we desire.
However, these matrices are dense and have floating point values throughout the $[{-}1,1]$ range.
Rounding these floating point values, even to a large set of discrete rational values, typically does not yield an exact rank decomposition.
The next section describes the techniques we use to convert dense approximate solutions to exact solutions.

\subsection{Heuristics to Encourage Sparsity for Exact Decompositions}
\label{sec:sparsify}

In order to obtain exact decompositions, we use the ALS regularization heuristics that proved effective for the non-invariant case, given in equation \eqref{eq:objfunreg}.
However, those techniques ignore the cyclic-invariant structure in the dense approximations we obtain from the techniques described in \S\ref{sec:LOQO}.
The heuristic we used to discover cyclic-invariant rank decomposition consists of alternating between ALS iterations with regularization and projection of the approximation back to the set of cyclic-invariant solutions.

We now describe the specifics of the regularization terms from \eqref{eq:objfunreg}.
The scalar parameter $\lambda$ determines the relative importance between an accurate approximation of the matrix multiplication tensor and adherence to the regularization terms.
In the context of ALS, only one of the regularization terms affects the optimization problem when updating one factor matrix.
The \emph{target matrices} $\tilde X$, $\tilde Y$, $\tilde Z$ are designed to encourage the corresponding factor matrices to match a desired structure, and they can be defined differently for each iteration of ALS.
Here is the technique proposed by Smirnov \cite{Smirnov13}.
Consider the update of factor matrix $X$.
By default, $\tilde X$ is set to have the values of $X$ from the previous iteration.
If any of the values are larger than 1 (or any specified maximum value), then the corresponding value of $\tilde X$ is set to magnitude 1 with the corresponding sign.
Then, for a given number $z$ of desired zeros, the smallest $z$ values of $\tilde X$ are set to exactly 0.
Thus, the regularization term $\|X-\tilde X\|_F$ will encourage any large values of $X$ to tend towards $\pm 1$ and the smallest $z$ values to tend toward 0.
The parameter $z$ can be varied over iterations and also across factor matrices (though in the case of cyclic-invariant approximations the set of values within each factor matrix is the same as those of the other factor matrices).

Because ALS does not enforce cyclic invariance, the approximation will tend to deviate from the structure given in equation \eqref{eq:cistruct}.
We project back to a cyclic-invariant approximation by setting all three values that should be the same in each of the factor matrices to their average value.

Fortunately, perhaps miraculously, by encouraging sparsity and bounding the variable values, when many entries are driven to zero, the nonzero values tend towards a discrete set of values (usually $\pm1$).
However, the process of converting a dense approximation to an exact decomposition involves manual tinkering and much experimentation.
The basic approach we have used successfully is to maintain a tight approximation to the matrix multiplication tensor, start with $z=0$, gradually increase $z$, frequently project back to cyclic invariance, and play with the $\lambda$ parameter between values of $10^{-3}$ and $1$.
ALS iterations are relatively cheap, so often 100 or 1000 iterations can be taken with a given parameter setting before changing the configuration.
While this process is artful and lacks any guarantees of success, we have nearly always been able to convert cyclic-invariant dense approximations of $\Mthree$ to exact decompositions.

\section{Symmetry groups of tensors and decompositions}\label{symgpsect}
  
In this section we explain how we found the additional symmetries beyond the $\BZ_3^{std}$ that was built into the search, 
and describe the full  
symmetry groups of the decompositions. We establish additional properties regarding symmetries
for use in future work.
We begin with a general discussion of symmetry groups of tensors and their decompositions.

\subsection{Symmetry groups of tensors}
Let $V$ be a complex vector space and let  $T\in V^{\ot k}$.  Define the {\it symmetry group} of $T$,  $G_T\subset GL(V)^{\times k} \rtimes \FS_k$ to be the subgroup preserving $T$,
where $\FS_k$ acts by permuting the factors.

 For a rank decomposition   $T=\sum_{j=1}^r t_j$, where each $t_j$ has rank one, i.e., $t_j=v_{1j}\otc v_{kj}$, define the  set 
$\cS :=\{t_1\hd t_r\}$, which we also call the decomposition,  and
the {\it symmetry group of the decomposition}  $\G_{\cS }:=\{ g\in G_T\mid g\cdot \cS =\cS \}$. 
We also consider $\cS$ as a point of  the variety
of $r$-tuples of unordered points on the Segre variety of rank one tensors, denoted $\hat Seg(\BP V\ctimes \BP V)^{(\times r)}$.

If $g\in G_T$, then  $g\cdot \cS :=\{ gt_1\hd gt_r\}$ is also a rank decomposition of $T$,
and  $\G_{g\cdot \cS }= g\G_{\cS } g\inv$ (see \cite{DBLP:journals/corr/ChiantiniILO16}), so 
decompositions come in  {\it  families}
parametrized by $G_T/\G_{\cS}$, and each member of the family has the same abstract symmetry group.  
We reserve the term {\it family} for $G_T$-orbits. The  quasi-projective 
subvariety $\Sigma_r^T\subset \hat Seg(\BP A_1\ctimes \BP A_k)^{(\times r)}$ of all rank $r$ decompositions is  a union of $G_T$ orbits. 

If one is not concerned with the rank of a decomposition, then for any finite subgroup $\G\subset G_T$,
$T$ admits rank decompositions $\cS$ with $\G\subseteq \G_{\cS}$, by taking any rank decomposition of $T$ and then averaging
it with its $\G$-translates. 
We will be concerned with rank decompositions that are minimal or close to minimal, so only very special groups $\G$ can occur.
 
 \subsection{Matrix multiplication}\label{mmultsym}
   The symmetry groups of matrix multiplication decompositions are useful for determining if two decompositions
lie in the same family, and the   groups 
that appear in   known  decompositions will be a guide for constructing decompositions
in future work.

    The symmetry groups of many of the decompositions
that have already appeared in the literature are determined in \cite{DBLP:journals/corr/Burichenko15}.   Burichenko does not use the associated
graphs discussed below. One can recover the results
of \cite{DBLP:journals/corr/Burichenko15} with shorter proofs by using them.

Let $T=\Mn \in \BC^{\nnn^2}\ot \BC^{\nnn^2}\ot \BC^{\nnn^2}=:A\ot B\ot C=(U^*\ot V)\ot (V^*\ot W)\ot (W^*\ot U)$ where $U,V,W=\BC^\nnn$.
Recall that $\Mn$ is the re-ordering of $\Id_U\ot \Id_V\ot \Id_W$ and
\be\label{gmn} G_{\Mn}=[(PGL(U)\times PGL(V)\times PGL(W))\rtimes \BZ_3]\rtimes \BZ_2\subset  GL(A)\times GL(B)\times GL(C) \rtimes \FS_3,
\ene
see, e.g.,  \cite[Thms. 3.3,3.4]{MR0506377}, \cite[Prop. 4.1]{DBLP:journals/corr/ChiantiniILO16}, \cite[Thm. 2.9]{MR3513064} or 
\cite[Prop. 4.7]{DBLP:journals/corr/Burichenko15}. 
   The $\BZ_2\subset  GL(A)\times  GL(B)\times  GL(C)\rtimes \FS_3$  
may be generated   by  e.g., $(a\ot b\ot c)\mapsto (a^T\ot c^T\ot b^T)$, and the $\BZ_3$   by  cyclically permuting the factors
$A,B,C$. 
The $\BZ_3\rtimes \BZ_2$ is isomorphic to $\FS_3$, but it is more naturally thought of 
as $\BZ_3\rtimes \BZ_2$, since it is
not the $\FS_3$ appearing in \eqref{gmn}.

Thus if  $\cS $ is    a  rank decomposition of $\Mn$, then
$\G_{\cS}\subset [(PGL(U)\times PGL(V)\times PGL(W))\rtimes \BZ_3]\rtimes \BZ_2$.

\begin{remark}
As pointed out by  Burichenko \cite{DBLP:journals/corr/Burichenko14}, for matrix multiplication 
rank decompositions one can define  an {\it extended 
symmetry group} $\hat\G_{\cS}$ by viewing $A\ot B\ot C=
 (U^*\ot U)^{\ot 3}= (U^*)^{\ot 3}\ot (U^{\ot 3})$.
 We do not study such groups in this paper. 
\end{remark}

 We call a $\BZ_2\subset \G_{\cS}$ a {\it   transpose like symmetry} if
it corresponds to  the symmetry
of $\Mn$ given by $ x\ot y\ot z \mapsto  x^T\ot z^T\ot y^T $, or a cyclic variant of it such as $x\ot y\ot z\mapsto
y^T\ot x^T\ot z^T$,   composed with an element of $PGL(U)\times PGL(V)\times PGL(W)$ such that the total map is an involution
on the elements of $\cS$.
Transpose  like symmetries where the elements of $PGL(U)\times PGL(V)\times PGL(W)$
are all the identity (which we will call {\it convenient transpose symmetries})  do not appear to be compatible with standard cyclic symmetries in
minimal decompositions,
at least this is the case for rank seven decompositions
of $\Mtwo$ and the known rank $23$  decompositions of $\Mthree$.

\begin{example} Let $x^i_j$, $1 \leq i,j \leq n$ be a basis of $A$,
$y^i_j$ a basis of $B$, and $z^i_j$ a basis of $C$.
Consider the standard decomposition $\cS_{std}$ of $\Mn$ of size $\nnn^3$:
\begin{equation}\label{eq:standardmamu}
\Mn=\sum_{i,j,k=1}^\nnn x^i_j\ot y^j_k\ot z^k_i. 
\end{equation}
Let $H^{SL_n}\subset SL_n$ denote the maximal torus (diagonal matrices).
It is clear
$\G_{\cS_{std}}\supseteq (H^{SL_n}\rtimes \FS_n)^{\times 3}\rtimes(\BZ_3\rtimes \BZ_2)$
because for $((\l_1\hd \l_n), (\mu_1\hd \mu_n), (\nu_1\hd \nu_n))\in (H^{SL_n} )^{\times 3}$, we have
$$(\l_i\mu_j\inv x^i_j)\ot (\mu_j\nu_k\inv y^j_k)\ot (\nu_k\l_i\inv z^k_i)=x^i_j\ot y^j_k\ot z^k_i$$
and for all $\s,\t,\eta\in \FS_n$, we have the equality of sets
$\{\cup_{i,j,k}   x^i_j\ot y^j_k\ot z^k_i\}= \{\cup_{i,j,k}   x^{\s(i)}_{\t(j)}\ot y^{\t(j)}_{\eta(k)}\ot z^{\eta(k)}_{\s(i)}\}$,   the cyclic symmetry is evident, and the $\BZ_2$ may be
generated e.g., by $x\ot y\ot z\mapsto x^T\ot z^T\ot y^T$.
\end{example}

The   {\it Comon conjecture} \cite{BCMT,MR3092255}, in its
original form,   asserts that a tensor $T\in (\BC^{N})^{\ot d}$ that happens to be symmetric, will 
have an optimal rank decomposition consisting of rank one symmetric tensors, that is, the symmetric tensor
rank of $T$ equals the usual tensor rank.
An explicit counter-example to this  when $N=800$  has been
asserted in  \cite{2017arXiv170508740S}. Nevertheless, there appear to be many instances where it
is known to hold, see, e.g.,  
\cite{MR3474849,MR3092255}
so we pose the following question:

\begin{question}[Generalized Comon Question] 
Given $T\in (\BC^N)^{\ot d}$ that is invariant under some
$\G\subset \FS_d$,  when does there exist  an optimal rank decomposition $\cS$ of $T$ that is
$\G$-invariant, i.e.,
$\G\subseteq \G_{\cS}$? \end{question}

\subsection{Invariants associated to a decomposition of $\Mn$}\label{graphsect}
Let $\Mn=\sum_{j=1}^r t_j$ be a rank decomposition $\cS$  for $\Mn$ and write $t_j=a_j\ot b_j\ot c_j$.
 Partition $\cS$ by   rank triples into
disjoint subsets: $\cS:=\{ \cS_{1,1,1}, \cS_{1,1,2}\hd \cS_{n,n,n}\}$, where  for $s \leq t \leq u$ we set  $\cS_{s,t,u}=\{ t_j\mid 
\{ \trank(a_j),\trank(b_j),\trank(c_j)\}=\{s,t,u\}  \}$.
Then 
$\G_{\cS}$ preserves each $\cS_{s,t,u}$.

We can say more about $\G_{\cS_{1,1,1}}$:  
If $a\in U^*\ot V$ and  $\trank(a)=1$, then there are unique points $[\mu]\in \BP U^*$ and
$[v]\in \BP V$ such that $[a]=[\mu\ot v]$, so 
define $\cS_{U^*}\subset \BP U^*$ and $\cS_U\subset \BP U$ to correspond to the
elements in $\BP U^*$ (resp. $\BP U$) appearing in $\cS_{1,1,1}$. 
Let $\tilde\cS_{U^*}\subset \BP U^*$ and $\tilde \cS_U\subset \BP U$ correspond
to the elements appearing in some rank one matrix in $\cS$.

Since we are concerned with decompositions with a standard cyclic symmetry, we will have
$\cS_U\simeq \cS_V\simeq \cS_W$ and $\cS_{U^*}\simeq \cS_{V^*}\simeq \cS_{W^*}$ and similarly
for the tilded spaces.

\medskip

Define a bipartite graph  $\ci\cG_{\cS}$,
 the {\it incidence graph}
where the top vertex set is given by elements in $\tilde \cS_{U^*}$ (or
  $\cS_{U^*}$) and the bottom vertex set by elements in 
$\tilde \cS_U$. Draw an edge between elements $[\mu]$ and $[v]$
if they are {\it incident}, i.e.,  $\mu(v)=0$. Geometrically, $[v]$ belongs
to the hyperplane determined by $[\mu]$ (and vice-versa). One  can weight the vertices
of this graph in several ways, the simplest 
is just by the number of times the element appears in the decomposition.
In practice (see the examples below) this has been enough to determine the
symmetry group $\G_{\cS}$, in the sense that it cuts the possible size of the group down
and it becomes straightforward to determine $ \G_\cS$ as a subgroup of the  symmetry group of
$\ci\cG_{\cS}$.

Incidence graphs for the four decompositions are given in Figures \ref{fig:std_inc}, \ref{fig:zfzt_inc}, \ref{fig:lad_inc}, and \ref{fig:twofix_inc}.
Here interpret the top set of vectors  as column vectors and the bottom set  as row vectors. Then $\mu$ from the top set  is incident to $v$ 
from the bottom set if the scalar $v\mu$ is zero, and then $\mu$ and $v$ are joined by an edge in the graph.

\tikzset{main node/.style = {rectangle, draw, text width=4em, text centered, rounded corners, minimum height=4em, minimum height=2em}}
\tikzset{line/.style = {draw,very thick}}


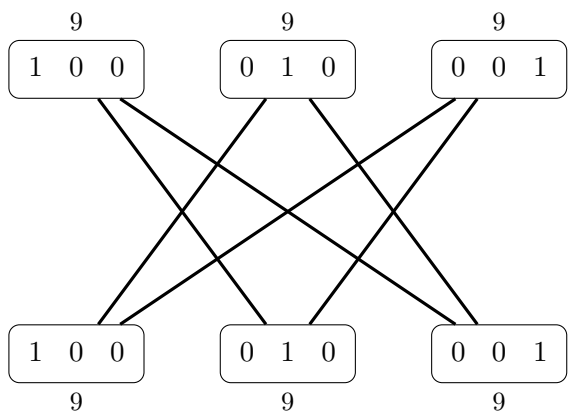
\begin{figure}
\begin{center}
\begin{tikzpicture}[node distance=3cm and 1cm, auto]

	\node[main node, label={\small 9}] (1u) {$\begin{matrix} 1 & 0 & 0 \end{matrix}$};
	\node[main node, label={\small 9}] (2u) [right=of 1u] {$\begin{matrix} 0 & 1 & 0 \end{matrix}$};
	\node[main node, label={\small 9}] (3u) [right=of 2u] {$\begin{matrix} 0 & 0 & 1 \end{matrix}$};

	\node[main node, label=below:{\small 9}] (1v) [below=of 1u] {$\begin{matrix} 1 & 0 & 0 \end{matrix}$};
	\node[main node, label=below:{\small 9}] (2v) [below=of 2u] {$\begin{matrix} 0 & 1 & 0 \end{matrix}$};
	\node[main node, label=below:{\small 9}] (3v) [below=of 3u] {$\begin{matrix} 0 & 0 & 1 \end{matrix}$};

	\path[line]
	(1u)
		edge node {} (2v)
		edge node {} (3v)
	(2u)
		edge node {} (1v)
		edge node {} (3v)
	(3u)
		edge node {} (1v)
		edge node {} (2v)
	;

\end{tikzpicture}
\end{center}
\caption{Incidence graph of standard decomposition}
\label{fig:std_inc}
\end{figure}

\begin{figure}
\begin{center}
\begin{tikzpicture}[node distance=3cm and 1cm, auto]

	\node[main node, label={\small 3}] (1u) {$\begin{matrix} 1 & 0 & 0 \end{matrix}$};
	\node[main node, label={\small 3}] (2u) [right=of 1u] {$\begin{matrix} 0 & 1 & 0 \end{matrix}$};
	\node[main node, label={\small 3}] (3u) [right=of 2u] {$\begin{matrix} 0 & 0 & 1 \end{matrix}$};
	\node[main node, label={\small 3}] (4u) [right=of 3u] {$\begin{matrix} 1 & 1 & 0 \end{matrix}$};
	\node[main node, label={\small 3}] (5u) [right=of 4u] {$\begin{matrix} 0 & 1 & 1 \end{matrix}$};
	\node[main node, label={\small 3}] (6u) [right=of 5u] {$\begin{matrix} 1 & 1 & 1 \end{matrix}$};

	\node[main node, label=below:{\small 4}] (1v) [below=of 1u] {$\begin{matrix} 1 & 0 & 0 \end{matrix}$};
	\node[main node, label=below:{\small 4}] (2v) [below=of 2u] {$\begin{matrix} 0 & 0 & 1 \end{matrix}$};
	\node[main node, label=below:{\small 4}] (3v) [below=of 3u] {$\begin{matrix} 1 & -1 & 0 \end{matrix}$};
	\node[main node, label=below:{\small 4}] (4v) [below=of 4u] {$\begin{matrix} 0 & 1 & -1 \end{matrix}$};
	\node[main node, label=below:{\small 1}] (5v) [below=of 5u] {$\begin{matrix} 0 & 1 & 0 \end{matrix}$};
	\node[main node, label=below:{\small 1}] (6v) [below=of 6u] {$\begin{matrix} 1 & 0 & -1 \end{matrix}$};

	\path[line]
	(1u)
		edge node {} (2v)
		edge node {} (4v)
		edge node {} (5v)
	(2u)
		edge node {} (1v)
		edge node {} (2v)
		edge node {} (6v)
	(3u)
		edge node {} (1v)
		edge node {} (3v)
		edge node {} (5v)
	(4u)
		edge node {} (2v)
		edge node {} (3v)
	(5u)
		edge node {} (1v)
		edge node {} (4v)
	(6u)
		edge node {} (3v)
		edge node {} (4v)
		edge node {} (6v)
	;

\end{tikzpicture}
\end{center}
\caption{Incidence graph of decomposition $\cS_{\BZ_4\times \BZ_3}$}
\label{fig:zfzt_inc}
\end{figure}

\begin{figure}
\begin{center}
\begin{tikzpicture}[node distance=3cm and 1cm, auto]

	\node[main node, label={\small 5}] (1u) {$\begin{matrix} 1 & 0 & 0 \end{matrix}$};
	\node[main node, label={\small 5}] (2u) [right=of 1u] {$\begin{matrix} 0 & 0 & 1 \end{matrix}$};
	\node[main node, label={\small 3}] (3u) [right=of 2u] {$\begin{matrix} 0 & 1 & 0 \end{matrix}$};
	\node[main node, label={\small 2}] (4u) [right=of 3u] {$\begin{matrix} 1 & 1 & 0 \end{matrix}$};
	\node[main node, label={\small 2}] (5u) [right=of 4u] {$\begin{matrix} 0 & 1 & -1 \end{matrix}$};

	\node[main node, label=below:{\small 5}] (1v) [below=of 1u] {$\begin{matrix} 1 & 0 & 0 \end{matrix}$};
	\node[main node, label=below:{\small 5}] (2v) [below=of 2u] {$\begin{matrix} 0 & 0 & 1 \end{matrix}$};
	\node[main node, label=below:{\small 3}] (3v) [below=of 3u] {$\begin{matrix} 0 & 1 & 0 \end{matrix}$};
	\node[main node, label=below:{\small 2}] (4v) [below=of 4u] {$\begin{matrix} 1 & -1 & 0 \end{matrix}$};
	\node[main node, label=below:{\small 2}] (5v) [below=of 5u] {$\begin{matrix} 0 & 1 & 1 \end{matrix}$};

	\path[line]
	(1u)
		edge node {} (2v)
		edge node {} (3v)
		edge node {} (5v)
	(2u)
		edge node {} (1v)
		edge node {} (3v)
		edge node {} (4v)
	(3u)
		edge node {} (1v)
		edge node {} (2v)
	(4u)
		edge node {} (2v)
		edge node {} (4v)
	(5u)
		edge node {} (1v)
		edge node {} (5v)
	;

\end{tikzpicture}
\end{center}
\caption{Incidence graph of decomposition $\cS_{Lader-\BZ_3^{std}}$}
\label{fig:lad_inc}
\end{figure}

\begin{figure}
\begin{center}
\begin{tikzpicture}[node distance=3cm and 1cm, auto]

	\node[main node, label={\small 4}] (1u) {$\begin{matrix} 1 & 0 & 0 \end{matrix}$};
	\node[main node, label={\small 4}] (2u) [right=of 1u] {$\begin{matrix} 0 & 1 & 0 \end{matrix}$};
	\node[main node, label={\small 4}] (3u) [right=of 2u] {$\begin{matrix} 0 & 0 & 1 \end{matrix}$};
	\node[main node, label={\small 3}] (4u) [right=of 3u] {$\begin{matrix} 1 & 1 & 1 \end{matrix}$};
	\node[main node, label={\small 2}] (5u) [right=of 4u] {$\begin{matrix} 0 & 1 & 1 \end{matrix}$};
	\node[main node, label={\small 1}] (6u) [right=of 5u] {$\begin{matrix} 1 & 1 & 0 \end{matrix}$};

	\node[main node, label=below:{\small 4}] (1v) [below=of 1u] {$\begin{matrix} 1 & 0 & 0 \end{matrix}$};
	\node[main node, label=below:{\small 4}] (2v) [below=of 2u] {$\begin{matrix} 0 & 0 & 1 \end{matrix}$};
	\node[main node, label=below:{\small 4}] (3v) [below=of 3u] {$\begin{matrix} 0 & 1 & -1 \end{matrix}$};
	\node[main node, label=below:{\small 3}] (4v) [below=of 4u] {$\begin{matrix} 1 & -1 & 0 \end{matrix}$};
	\node[main node, label=below:{\small 2}] (5v) [below=of 5u] {$\begin{matrix} 0 & 1 & 0 \end{matrix}$};
	\node[main node, label=below:{\small 1}] (6v) [below=of 6u] {$\begin{matrix} 1 & 0 & -1 \end{matrix}$};

	\path[line]
	(1u)
		edge node {} (2v)
		edge node {} (3v)
		edge node {} (5v)
	(2u)
		edge node {} (1v)
		edge node {} (2v)
		edge node {} (6v)
	(3u)
		edge node {} (1v)
		edge node {} (4v)
		edge node {} (5v)
	(4u)
		edge node {} (3v)
		edge node {} (4v)
		edge node {} (6v)
	(5u)
		edge node {} (1v)
		edge node {} (3v)
	(6u)
		edge node {} (2v)
		edge node {} (4v)
	;

\end{tikzpicture}
\end{center}
\caption{Incidence graph of decomposition $\cS_{2fix-\BZ_3}$}
\label{fig:twofix_inc}
\end{figure}

For decompositions where the three copies of $Mat_{n\times n}$ have been identified, such as our
$\BZ_3^{std}$-invariant decompositions, consider the {\it restricted family} $PGL_n\rtimes (\BZ_3\rtimes \BZ_2)\cdot \cS$, where
$PGL_n\subset PGL_n^{\times 3}$ is diagonally embedded and consider the corresponding restricted symmetry groups
$\G_{\cS}^{res} \subset PGL_n\rtimes (\BZ_3\rtimes \BZ_2)$. We define a second graph  $\cP\cG_{\cS}$ that
is an invariant of this restricted family, 
  the {\it pairing graph} which has an edge between $[\mu]$ and
$[v]$ if $\mu\ot v$ appears in the decomposition, and   triples of edges that appear in the same summands are grouped by color,
and one can weight the edge by the number of times it appears. 
Pairing graphs for the four decompositions are given in Figures \ref{fig:std_pair}, \ref{fig:zfzt_pair}, \ref{fig:lad_pair}, and \ref{fig:twofix_pair}.
The dashed black lines correspond to cubes, so should be interpreted as edges with multiplicity three.

As is clear from this discussion, one  can continue labeling and coloring to get
additional  information about the decomposition. 


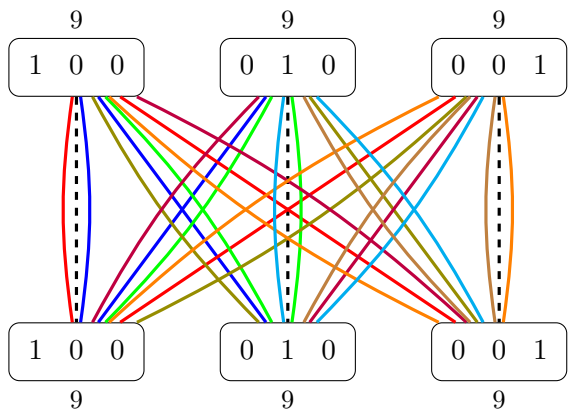
\begin{figure}
\begin{center}
\begin{tikzpicture}[node distance=3cm and 1cm, auto]

	\node[main node, label={\small 9}] (1u) {$\begin{matrix} 1 & 0 & 0 \end{matrix}$};
	\node[main node, label={\small 9}] (2u) [right=of 1u] {$\begin{matrix} 0 & 1 & 0 \end{matrix}$};
	\node[main node, label={\small 9}] (3u) [right=of 2u] {$\begin{matrix} 0 & 0 & 1 \end{matrix}$};

	\node[main node, label=below:{\small 9}] (1v) [below=of 1u] {$\begin{matrix} 1 & 0 & 0 \end{matrix}$};
	\node[main node, label=below:{\small 9}] (2v) [below=of 2u] {$\begin{matrix} 0 & 1 & 0 \end{matrix}$};
	\node[main node, label=below:{\small 9}] (3v) [below=of 3u] {$\begin{matrix} 0 & 0 & 1 \end{matrix}$};

	\path[line]
	(1u) edge [dashed,black,] node {} (1v)
	(2u) edge [dashed,black,] node {} (2v)
	(3u) edge [dashed,black,] node {} (3v)
	(2u) edge [blue,] node {} (1v)
	(1u) edge [blue,] node {} (2v)
	(1u) edge [blue,bend left=8] node {} (1v)
	(3u) edge [red,] node {} (1v)
	(1u) edge [red,] node {} (3v)
	(1u) edge [red,bend right=8] node {} (1v)
	(2u) edge [green,bend left=8] node {} (1v)
	(2u) edge [green,bend left=8] node {} (2v)
	(1u) edge [green,bend left=8] node {} (2v)
	(3u) edge [olive,bend left=8] node {} (1v)
	(2u) edge [olive,] node {} (3v)
	(1u) edge [olive,bend right=8] node {} (2v)
	(2u) edge [purple,bend right=8] node {} (1v)
	(3u) edge [purple,] node {} (2v)
	(1u) edge [purple,bend left=8] node {} (3v)
	(3u) edge [orange,bend right=8] node {} (1v)
	(3u) edge [orange,bend left=8] node {} (3v)
	(1u) edge [orange,bend right=8] node {} (3v)
	(3u) edge [cyan,bend left=8] node {} (2v)
	(2u) edge [cyan,bend left=8] node {} (3v)
	(2u) edge [cyan,bend right=8] node {} (2v)
	(3u) edge [brown,bend right=8] node {} (2v)
	(3u) edge [brown,bend right=8] node {} (3v)
	(2u) edge [brown,bend right=8] node {} (3v)
	;

\end{tikzpicture}
\end{center}
\caption{Pairing graph of standard decomposition}
\label{fig:std_pair}
\end{figure}

\begin{figure}
\begin{center}
\begin{tikzpicture}[node distance=3cm and 1cm, auto]

	\node[main node, label={\small 3}] (1u) {$\begin{matrix} 1 & 0 & 0 \end{matrix}$};
	\node[main node, label={\small 3}] (2u) [right=of 1u] {$\begin{matrix} 0 & 1 & 0 \end{matrix}$};
	\node[main node, label={\small 3}] (3u) [right=of 2u] {$\begin{matrix} 0 & 0 & 1 \end{matrix}$};
	\node[main node, label={\small 3}] (4u) [right=of 3u] {$\begin{matrix} 1 & 1 & 0 \end{matrix}$};
	\node[main node, label={\small 3}] (5u) [right=of 4u] {$\begin{matrix} 0 & 1 & 1 \end{matrix}$};
	\node[main node, label={\small 3}] (6u) [right=of 5u] {$\begin{matrix} 1 & 1 & 1 \end{matrix}$};

	\node[main node, label=below:{\small 4}] (1v) [below=of 1u] {$\begin{matrix} 1 & 0 & 0 \end{matrix}$};
	\node[main node, label=below:{\small 4}] (2v) [below=of 2u] {$\begin{matrix} 0 & 0 & 1 \end{matrix}$};
	\node[main node, label=below:{\small 4}] (3v) [below=of 3u] {$\begin{matrix} 1 & -1 & 0 \end{matrix}$};
	\node[main node, label=below:{\small 4}] (4v) [below=of 4u] {$\begin{matrix} 0 & 1 & -1 \end{matrix}$};
	\node[main node, label=below:{\small 1}] (5v) [below=of 5u] {$\begin{matrix} 0 & 1 & 0 \end{matrix}$};
	\node[main node, label=below:{\small 1}] (6v) [below=of 6u] {$\begin{matrix} 1 & 0 & -1 \end{matrix}$};

	\path[line]
	(4u) edge [dashed,black,] node {} (5v)
	(5u) edge [dashed,black,] node {} (6v)
	(1u) edge [dashed,black,] node {} (1v)
	(2u) edge [dashed,black,] node {} (3v)
	(3u) edge [dashed,black,] node {} (4v)
	(6u) edge [dashed,black,] node {} (2v)
	(2u) edge [blue,] node {} (2v)
	(5u) edge [blue,] node {} (4v)
	(3u) edge [blue,] node {} (3v)
	(3u) edge [red,] node {} (1v)
	(4u) edge [red,] node {} (2v)
	(6u) edge [red,] node {} (4v)
	(6u) edge [green,] node {} (3v)
	(5u) edge [green,] node {} (1v)
	(1u) edge [green,] node {} (2v)
	(1u) edge [olive,] node {} (4v)
	(4u) edge [olive,] node {} (3v)
	(2u) edge [olive,] node {} (1v)
	;

\end{tikzpicture}
\end{center}
\caption{Pairing graph of decomposition $\cS_{\BZ_4\times \BZ_3}$}
\label{fig:zfzt_pair}
\end{figure}

\begin{figure}
\begin{center}
\begin{tikzpicture}[node distance=3cm and 1cm, auto]

	\node[main node, label={\small 5}] (1u) {$\begin{matrix} 1 & 0 & 0 \end{matrix}$};
	\node[main node, label={\small 5}] (2u) [right=of 1u] {$\begin{matrix} 0 & 0 & 1 \end{matrix}$};
	\node[main node, label={\small 3}] (3u) [right=of 2u] {$\begin{matrix} 0 & 1 & 0 \end{matrix}$};
	\node[main node, label={\small 2}] (4u) [right=of 3u] {$\begin{matrix} 1 & 1 & 0 \end{matrix}$};
	\node[main node, label={\small 2}] (5u) [right=of 4u] {$\begin{matrix} 0 & 1 & -1 \end{matrix}$};

	\node[main node, label=below:{\small 5}] (1v) [below=of 1u] {$\begin{matrix} 1 & 0 & 0 \end{matrix}$};
	\node[main node, label=below:{\small 5}] (2v) [below=of 2u] {$\begin{matrix} 0 & 0 & 1 \end{matrix}$};
	\node[main node, label=below:{\small 3}] (3v) [below=of 3u] {$\begin{matrix} 0 & 1 & 0 \end{matrix}$};
	\node[main node, label=below:{\small 2}] (4v) [below=of 4u] {$\begin{matrix} 1 & -1 & 0 \end{matrix}$};
	\node[main node, label=below:{\small 2}] (5v) [below=of 5u] {$\begin{matrix} 0 & 1 & 1 \end{matrix}$};

	\path[line]
	(3u) edge [dashed,black,] node {} (3v)
	(2u) edge [dashed,black,] node {} (2v)
	(4u) edge [dashed,black,] node {} (1v)
	(1u) edge [dashed,black,] node {} (4v)
	(1u) edge [blue,] node {} (1v)
	(2u) edge [blue,] node {} (1v)
	(1u) edge [blue,] node {} (2v)
	(1u) edge [green,] node {} (3v)
	(3u) edge [green,] node {} (1v)
	(2u) edge [olive,] node {} (3v)
	(3u) edge [olive,] node {} (2v)
	(5u) edge [purple,] node {} (2v)
	(4u) edge [purple,] node {} (2v)
	(5u) edge [purple,] node {} (1v)
	(2u) edge [orange,] node {} (5v)
	(1u) edge [orange,] node {} (5v)
	(2u) edge [orange,] node {} (4v)
	;

\end{tikzpicture}
\end{center}
\caption{Pairing graph of decomposition $\cS_{Lader-\BZ_3^{std}}$}
\label{fig:lad_pair}
\end{figure}

\begin{figure}
\begin{center}
\begin{tikzpicture}[node distance=3cm and 1cm, auto]

	\node[main node, label={\small 4}] (1u) {$\begin{matrix} 1 & 0 & 0 \end{matrix}$};
	\node[main node, label={\small 4}] (2u) [right=of 1u] {$\begin{matrix} 0 & 1 & 0 \end{matrix}$};
	\node[main node, label={\small 4}] (3u) [right=of 2u] {$\begin{matrix} 0 & 0 & 1 \end{matrix}$};
	\node[main node, label={\small 3}] (4u) [right=of 3u] {$\begin{matrix} 1 & 1 & 1 \end{matrix}$};
	\node[main node, label={\small 2}] (5u) [right=of 4u] {$\begin{matrix} 0 & 1 & 1 \end{matrix}$};
	\node[main node, label={\small 1}] (6u) [right=of 5u] {$\begin{matrix} 1 & 1 & 0 \end{matrix}$};

	\node[main node, label=below:{\small 4}] (1v) [below=of 1u] {$\begin{matrix} 1 & 0 & 0 \end{matrix}$};
	\node[main node, label=below:{\small 4}] (2v) [below=of 2u] {$\begin{matrix} 0 & 0 & 1 \end{matrix}$};
	\node[main node, label=below:{\small 4}] (3v) [below=of 3u] {$\begin{matrix} 0 & 1 & -1 \end{matrix}$};
	\node[main node, label=below:{\small 3}] (4v) [below=of 4u] {$\begin{matrix} 1 & -1 & 0 \end{matrix}$};
	\node[main node, label=below:{\small 2}] (5v) [below=of 5u] {$\begin{matrix} 0 & 1 & 0 \end{matrix}$};
	\node[main node, label=below:{\small 1}] (6v) [below=of 6u] {$\begin{matrix} 1 & 0 & -1 \end{matrix}$};

	\path[line]
	(2u) edge [dashed,black,] node {} (3v)
	(2u) edge [blue,] node {} (6v)
	(5u) edge [blue,] node {} (3v)
	(1u) edge [green,] node {} (3v)
	(3u) edge [green,] node {} (1v)
	(3u) edge [green,] node {} (2v)
	(6u) edge [olive,] node {} (1v)
	(3u) edge [olive,] node {} (5v)
	(1u) edge [olive,bend left=8] node {} (3v)
	(1u) edge [purple,] node {} (1v)
	(3u) edge [purple,] node {} (4v)
	(4u) edge [purple,] node {} (2v)
	(2u) edge [orange,] node {} (2v)
	(2u) edge [orange,] node {} (4v)
	(4u) edge [orange,] node {} (5v)
	(5u) edge [cyan,] node {} (2v)
	(1u) edge [cyan,] node {} (4v)
	(4u) edge [cyan,] node {} (1v)
	;

\end{tikzpicture}
\end{center}
\caption{Pairing graph of decomposition $\cS_{2fix-\BZ_3}$}
\label{fig:twofix_pair}
\end{figure}

\section{Symmetry groups of our decompositions}\label{oursymgps}

With the graphs in hand, it is straightforward to determine the symmetry groups.

\subsection{Symmetries of $ {\cS_{\BZ_4\times \BZ_3}}$}

  \begin{proposition}\label{43symmetries} The symmetry group of $ {\cS_{\BZ_4\times \BZ_3}}$  is  $\G_{\cS_{\BZ_4\times \BZ_3}}=\BZ_4^{a_0}\times \BZ_3^{std}$, where  
$\BZ_4^{a_0}\subset PGL_3\subset PGL_3^{\times 3}$ is generated by $a_0$ of \eqref{aodef}.
  \end{proposition}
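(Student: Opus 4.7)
The plan is the following. The containment $\BZ_4^{a_0}\times\BZ_3^{std}\subseteq\G_{\cS_{\BZ_4\times\BZ_3}}$ is immediate from the construction, since the decomposition \eqref{43a}--\eqref{43e} is literally presented as a union of $\BZ_4^{a_0}\times\BZ_3^{std}$-orbits, so each generator preserves the set $\cS_{\BZ_4\times\BZ_3}$. The two factors commute because $\BZ_4^{a_0}$ acts by simultaneous conjugation on all three tensor factors and therefore commutes with cyclic permutation of factors; so this subgroup has order $12$.

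For the reverse inclusion I would use the graph invariants from \S\ref{graphsect}. Any $g\in\G_{\cS_{\BZ_4\times\BZ_3}}$ induces, after being composed with a suitable element of $\BZ_3\rtimes\BZ_2$ to restore the factor order, a weighted automorphism of the incidence graph $\ci\cG_{\cS_{\BZ_4\times\BZ_3}}$ of Figure \ref{fig:zfzt_inc}. The top vertices all have weight $3$ while the bottom vertices have weight sequence $(4,4,4,4,1,1)$; in particular no element of $\G_{\cS_{\BZ_4\times\BZ_3}}$ can swap the top and bottom sets, which rules out transpose-like symmetries and is consistent with the heuristic remark in \S\ref{mmultsym}. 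Any surviving automorphism must send the two weight-$1$ bottom vertices $\{[(0,1,0)],[(1,0,-1)]\}$ to themselves, and similarly must permute the four weight-$4$ vertices among themselves, while respecting the edge set.

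To cut the list of candidate automorphisms further, I would pass to the colored pairing graph $\cP\cG_{\cS_{\BZ_4\times\BZ_3}}$ of Figure \ref{fig:zfzt_pair}: a genuine symmetry must permute the colored triples of edges as a set. Combined with the vertex-weight constraints, a direct combinatorial inspection will yield a short list of candidate pairs of permutations of the six top and six bottom vertices, comparable in size with the order of $\BZ_4^{a_0}$.

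The main obstacle is that a combinatorial automorphism of either graph need not lift to an element of $PGL_3^{\times 3}\rtimes(\BZ_3\rtimes\BZ_2)$, so one must verify lift-ability case by case. For each candidate I would check whether there exists $g\in PGL_3$ whose induced action on $\BP U^*$ and on $\BP U$ realizes the prescribed permutations (three images determine $g$, the others impose nontrivial rational consistency conditions), and then check that the lifted element preserves the rank-$(1,1,2)$ and rank-$(2,2,2)$ strata $\cS_{1,1,2}$ and $\cS_{2,2,2}$ which are invisible to the incidence graph. I expect the finite case analysis to leave exactly the $12$ elements of $\BZ_4^{a_0}\times\BZ_3^{std}$, completing the reverse inclusion.
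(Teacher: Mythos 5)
Your first steps match the paper's: the easy containment is noted, and the incidence graph (Figure~\ref{fig:zfzt_inc}) is used to rule out transpose-like symmetries because the top and bottom weight sequences differ. From there, however, your proposal and the paper diverge, and your version leaves the crux undone.

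You propose to enumerate combinatorial automorphisms of the weighted incidence graph and colored pairing graph, and then to test each candidate for liftability to $PGL_3^{\times 3}\rtimes(\BZ_3\rtimes\BZ_2)$ and for preservation of the rank strata. This is a plausible strategy in principle, but it is only a plan: the ``direct combinatorial inspection'' and the case-by-case liftability check are exactly the content of the proof, and you do not carry them out. You also acknowledge yourself that a graph automorphism need not lift, so without the finite case analysis the argument is incomplete. That is a genuine gap, not merely a stylistic difference.

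The paper instead uses a short algebraic argument that sidesteps the enumeration entirely. Writing $\g=(g,h,k)\in PGL_3^{\times 3}$, it observes that $a_0^{\ot 3}$ is the unique summand of full rank, hence fixed by $\g$; this gives $[a_0]=[ga_0h^{-1}]=[ha_0k^{-1}]=[ka_0g^{-1}]$, whence $a_0^3=ga_0^3g^{-1}$, and since $a_0^3=a_0^{-1}$, each of $g,h,k$ commutes with $a_0$. So $\g$ lies in the centralizer of $a_0$ in $PGL_3^{\times 3}$ and commutes with the $\BZ_4^{a_0}$-action. Preservation of the rank-$2$ orbit \eqref{43d} then forces, after multiplying by a power of $a_0$, that $g,h,k$ commute with $M^3$ for $M$ as in \eqref{labelM}; and the only matrices commuting with both $a_0$ and $M^3$ are scalars, leaving exactly $\BZ_4^{a_0}$. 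This bypasses both the combinatorial enumeration and the liftability issue, and is the key idea missing from your write-up. If you want to salvage your route, the critical missing ingredient is precisely the observation about the unique full-rank summand and the centralizer of $a_0$, which collapses the candidate set without any graph enumeration.
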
 
\begin{proof}
The incidence graph shows no transpose-like symmetry is possible as points occur with different frequencies
in the two spaces.  Since we already know the $\BZ_3^{std}$ symmetry, we are reduced to determining
$\G_{\cS_{\BZ_4\times \BZ_3}}\cap PGL_3^{\times 3}$.

Say $\g=(g,h,k)\in PGL_3^{\times 3}$ preserves $\cS_{\BZ_4\times\BZ_3}$.
Since $a_0$ is is the unique term 
in the decomposition of full rank,  
$a_0^{\otimes 3}$ is fixed by $\g$. 
Since $a_0$ is fixed up to scale,  $[a_0] = [g a_0 h^{-1}] = [h a_0 k^{-1}] = [k
a_0 g^{-1}]$. This implies, up to a scale that we can ignore, that    $a_0^3 = g a_0 h^{-1} h a_0 k^{-1} k a_0
g^{-1} = g a_0^3 g^{-1}$. Hence $a_0^3$ and $g$ commute. 
(We will argue similarly several times in what follows.) 
Two matrices $A,B$,
with $B$ invertible,  commute if and only if $A$ commutes with $B^{-1}$. Since $a_0^3 = a_0^{-1}$,  we have $g, g^{-1}$ commute  with $a_0$. This reasoning holds for $h$ and $k$ as well.

Since $g, h, k$ all commute with $a_0$,   $\g$ commutes with the $\Z_4$ action.  
Since $\g$ preserves rank,   the   orbit \eqref{43d} must be fixed. 
Since $\g$ commutes with the $\Z_4$ action,  it suffices to determine it
 up to a power of the $\Z_4$ action. Namely we can expect $\g$ to fix one of the elements of the $4$ orbits of rank 2 matrices,
 e.g., the matrix in \eqref{43d}
\be\label{labelM}M= \begin{pmatrix} 0&- 1&0\\ 1&-1&0\\ 0&0&0\end{pmatrix}.
\ene

Case 1:   $M = g M h^{-1} = h M k^{-1} = k M g^{-1}$, which implies $g, h, k$ all commute with $M^3$ (using the same reasoning as above). The only matrices which commute  with both $M^3$ and $a_0$ are scalar multiples of the identity and we conclude.

Case 2: $a_0Ma_0\inv=g M h^{-1} = h M k^{-1} = k M g^{-1}$ which implies $a_0M^3a_0\inv=gM^3g$ so $a_0\inv gM^3=(a_0\inv g)\inv$.
Hence $a_0\inv g$ commutes with $M^3$ and $a_0$, thus is a scalar multiple of the identity and
$g=h=k=a_0$, which has already been accounted for.  

The other two cases, like case 2, only with $a_0^2$ and $a_0^3$ playing the role of $a_0$,  are similar.  
\end{proof}

\begin{proposition}\label{32symprop} In  the family of decompositions
$PGL_3^{\times 3}\rtimes (\BZ_3\rtimes \BZ_2) \cdot  {\cS_{\BZ_4\times \BZ_3}}$,
the set of  $ \BZ_3^{std}$-invariant decompositions
is  the  image of the diagonal $PGL_3$-action on $\cS_{\BZ_4\times \BZ_3}$ times the standard transpose $\BZ_2$.
\end{proposition}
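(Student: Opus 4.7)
The plan is to translate the $\BZ_3^{std}$-invariance condition into a group-theoretic one in $G := PGL_3^{\times 3}\rtimes(\BZ_3\rtimes\BZ_2)$. Writing $\pi$ for the generator of $\BZ_3^{std}$ and $\cS:=\cS_{\BZ_4\times\BZ_3}$, a decomposition $g\cdot\cS$ in the orbit is $\BZ_3^{std}$-invariant iff $g^{-1}\pi g\in\G_\cS$. By Proposition \ref{43symmetries}, $\G_\cS=\BZ_4^{a_0}\times\BZ_3^{std}$; since $\gcd(4,3)=1$, the only order-$3$ elements of this direct product are those in $\BZ_3^{std}$, so necessarily $g^{-1}\pi g\in\{\pi,\pi^{-1}\}$.

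I would then compute the locus of such $g\in G$, splitting into the two cases. Case A ($g\in C_G(\pi)$): in the quotient $G/PGL_3^{\times 3}\cong\FS_3$ the centralizer of $\pi$ is $\langle\pi\rangle$, while the conjugation action of $\pi$ on $PGL_3^{\times 3}$ is a cyclic permutation of the three factors, whose fixed subgroup is the diagonal $\Delta PGL_3$; hence $C_G(\pi)=\Delta PGL_3\cdot\langle\pi\rangle$. Case B ($g^{-1}\pi g=\pi^{-1}$): since the standard transpose $\theta$ satisfies $\theta\pi\theta^{-1}=\pi^{-1}$ in $\FS_3$, the set of such $g$ is the single coset $\theta\cdot C_G(\pi)=\theta\cdot\Delta PGL_3\cdot\langle\pi\rangle$.

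Finally I would package these back into decompositions. Since $\langle\pi\rangle\subset\G_\cS$ acts trivially on $\cS$, the $\pi^m$ factors can be absorbed; since $\theta$ normalizes $\Delta PGL_3$ (conjugation by $\theta$ sends a diagonal $(g,g,g)$ to another diagonal element, such as $(g^T,g^T,g^T)$ or $(g^{-T},g^{-T},g^{-T})$), the two cases contribute $\Delta PGL_3\cdot\cS$ and $\Delta PGL_3\cdot\theta\cdot\cS$ respectively. Their union is exactly $\Delta PGL_3\cdot\langle\theta\rangle\cdot\cS$, as claimed. The main obstacle is the careful bookkeeping of the semidirect-product structure and the action of $\theta$ on $PGL_3^{\times 3}$; the argument succeeds cleanly precisely because $\gcd(|a_0|,|\pi|)=1$, which rules out any twisted centralizer producing further $\BZ_3^{std}$-invariant decompositions.
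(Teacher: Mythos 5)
Your proof is correct, and it takes a genuinely different route from the paper's. You translate the $\BZ_3^{std}$-invariance of $g\cdot\cS$ into the group-theoretic condition $g^{-1}\pi g\in\G_{\cS}$, then invoke Proposition \ref{43symmetries} as a black box: since $\G_\cS=\BZ_4^{a_0}\times\BZ_3^{std}$ and $\gcd(4,3)=1$, the only order-three elements are $\pi^{\pm 1}$, and the problem reduces to the centralizer $C_G(\pi)=\Delta PGL_3\cdot\langle\pi\rangle$ plus its $\theta$-twist. The paper instead works directly with the decomposition: it uses the uniqueness of the rank-$(3,3,3)$ summand to force $ga_0h^{-1}=ha_0k^{-1}=ka_0g^{-1}$, reduces to classifying cube roots of $a_0^{-1}$ (twenty-seven candidates by eigensystems), and then uses the action on the rank-two orbit $M$ of \eqref{labelM} to eliminate all but the trivial one. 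Both approaches ultimately rest on the same concrete facts about $\cS$ (uniqueness of the full-rank cube, absence of transpose-like symmetry), but you invoke them once through Proposition \ref{43symmetries} rather than re-deriving constraints term by term. The payoff of your argument is brevity and transparency, in particular making explicit that the coprimality $\gcd(|a_0|,|\pi|)=1$ is the mechanism that rules out twisted invariant subfamilies; the payoff of the paper's argument is that it re-verifies the symmetry-group computation from scratch, so it would survive even if the bookkeeping in Proposition \ref{43symmetries} contained an error. Two small points worth tightening: the observation that $C_G(\pi)\subseteq PGL_3^{\times 3}\cdot\langle\pi\rangle$ follows from the quotient computation, but you should explicitly note that for $g\sigma$ with $\sigma\in\langle\pi\rangle$, one has $(g\sigma)\pi(g\sigma)^{-1}=g\pi g^{-1}$, so commuting with $\pi$ is equivalent to $g\in\Delta PGL_3$; and the conjugation of $\Delta PGL_3$ by $\theta$ sends $(g,g,g)\mapsto(g^{-T},g^{-T},g^{-T})$ (inverse transpose), though as you note either form suffices since both remain diagonal.
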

\begin{proof}
We need to show any  $\g=(g,h,k)\in PGL_3^{\times 3} $ that takes $\cS_{\BZ_4\times \BZ_3}$
 to another decomposition invariant under the same standard $ \BZ_3$ satisfies $g=h=k$.
 
Since $\g$ fixes rank and there is only one tensor consisting of
 three matrices of rank 3,   $\g\cdot a_0^{\otimes 3}$ must be a $\Z_3$-fixed point.
 Thus  $ g a_0 h^{-1} = h a_0 k^{-1} = k a_0 g^{-1}$. Write $a= g a_0 h^{-1}$.    This implies $a^3 = g a_0^3 g^{-1} = h a_0^3 h^{-1} = k a_0^3 k^{-1}$. Equivalently $a^{-3} = g a_0 g^{-1}$, and the same for $h, k$. Thus
 $$a = a^{-3} g h^{-1} = g h^{-1} a^{-3}.
 $$
Equivalently
$$a^4 = g h^{-1}.
$$

We have $g^{-1} a g$ is a cube root of $a_0^{-1}$.
Recall that  $a_0^{-1}$ has distinct eigenvalues $-1, i, -i$ so all cube roots must likewise have distinct eigenvalues. This uniquely determines the finite choices of eigensystems for our cube roots,  yielding 27 possibilities. 
The $\Z_3^{std}$ action commutes with the diagonal $PGL_3$, 
so  we can assume $g = \Id$.
Then $a$ is a cube root of $a_0^{-1}$, so $a$ has order $12$. Combining this we the observations $a^4 = h^{-1} = h k^{-1}$ shows
$h = a^{-4}$ and
$k = a^{-8}$. 
This leaves us with 27 total candidate restricted families  that have a rank three $\BZ_4\times \BZ_3$-fixed summand.

Since $g, h, k$ all commute with $a_0 = a^{-3}$,    $\g$  commutes with the $\Z_4$ action. 
Consider the action of $\g$
on the  two terms in \eqref{43c}.  
Since 
the $\Z_4$ action sends $\Z_3$ fixed points to $\Z_3$ fixed points,  these summands
each  map to $\Z_3$ fixed points.

For $\g$ to send $M$ of \eqref{labelM} to another $\Z_3$ fixed point we must have
$$M a^4 = a^{-4} M a^{8} = a^{-8} M = a^4 M.$$

Testing all $27$ cube roots of $a_0^{-1}$ we see that $M$ commuting with $a^{-4}$
implies $a$ is   a scalar times  $a_0$, and we may assume $a = a_0$, so $a^4 = \Id$. 
This implies that there are no other   $  \BZ_3^{std}$-invariant  subfamilies in the family.
\end{proof}

 \subsection{Symmetries of $\cS_{Lader-\BZ_3}$}

 The symmetry group of $G_{\cS_{Lader-\BZ_3}}$ has already been discussed. 
 
 \begin{proposition}  In Laderman the family of decompositions,
the set of  $ \BZ_3^{std}$-invariant decompositions
is  the  image of the diagonal $PGL_3$-action on $\cS_{Lader-\BZ_3}$ times the standard transpose $\BZ_2$.
\end{proposition}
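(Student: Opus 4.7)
The plan is to mirror the proof of Proposition \ref{32symprop}, using as the key structural input the fact from \cite{DBLP:journals/corr/Burichenko14} that $\G_\cS\simeq \FS_4$ for $\cS:=\cS_{Lader-\BZ_3^{std}}$. First I would verify the easy inclusion: any $\Delta(g)\cS$ is $\BZ_3^{std}$-invariant because $\Delta(g)$ commutes with $\pi$, and any $\Delta(g)\tau\cS$ is $\BZ_3^{std}$-invariant because $\tau\pi\tau\inv=\pi\inv$ preserves $\cS$, where $\tau$ denotes the standard transpose $x\ot y\ot z\mapsto x^T\ot z^T\ot y^T$. This gives $\Delta(PGL_3)\cdot\langle\tau\rangle\cdot\cS\subseteq\{\BZ_3^{std}\text{-invariant decompositions in the family}\}$.

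For the reverse containment, suppose $\gamma\in G:=G_{\Mthree}$ satisfies that $\gamma\cS$ is $\BZ_3^{std}$-invariant. Then $\gamma\inv\pi\gamma\cdot\cS=\cS$, so $\gamma\inv\pi\gamma\in\G_\cS$ and this element has order $3$. The crucial observation is that all $3$-cycles in $\FS_4$ form a single conjugacy class, so there exists $\eta\in\G_\cS$ with $\eta\inv(\gamma\inv\pi\gamma)\eta=\pi$. Consequently $\gamma\eta$ centralizes $\pi$ in $G$, and since $\eta\cS=\cS$, we have the equality $\gamma\cS=(\gamma\eta)\cS$, which now lies in $Z_G(\pi)\cdot\cS$.

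The last step is to identify $Z_G(\pi)$. Write $G=PGL_3^{\times 3}\rtimes(\BZ_3\rtimes\BZ_2)$. Exactly as in the proof of Proposition \ref{32symprop}, an element $(g,h,k)\in PGL_3^{\times 3}$ commutes with $\pi$ iff $g=h=k$, so the $PGL_3^{\times 3}$-part of the centralizer is $\Delta(PGL_3)$. In the $\BZ_3\rtimes\BZ_2$ factor, the centralizer of $\pi$ is just $\langle\pi\rangle$ since $\tau\pi\tau\inv=\pi\inv$. Altogether $Z_G(\pi)=\Delta(PGL_3)\cdot\langle\pi\rangle$, and since $\pi\in\G_\cS$ we conclude $\gamma\cS\in\Delta(PGL_3)\cS\subseteq\Delta(PGL_3)\cdot\langle\tau\rangle\cdot\cS$, completing the proof.

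The main obstacle I expect is bookkeeping rather than conceptual: one must carefully split each element of $G$ into its $PGL_3^{\times 3}$ and $\FS_3$ components and track how conjugation by $\tau$ twists the $PGL_3^{\times 3}$ factor when computing centralizers and conjugates. One should also verify that the abstract identification $\G_\cS\simeq\FS_4$ of \cite{DBLP:journals/corr/Burichenko14} transports correctly through the explicit coordinate change $x\ot y\ot z\mapsto\t_{12}x\ot y\ep_2\t_{12}\ot \t_{12}\ep_2 z\t_{12}$ relating $\cS_{Laderman}$ and $\cS_{Lader-\BZ_3^{std}}$, so that $\pi$ indeed maps to a $3$-cycle in this $\FS_4$; this is routine from the definition of the symmetry group of a decomposition. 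Once these are in place, the entire argument reduces to the two facts that $3$-cycles form one conjugacy class in $\FS_4$ and that $Z_G(\pi)=\Delta(PGL_3)\cdot\langle\pi\rangle$.
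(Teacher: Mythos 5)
Your proof is correct and takes a genuinely different route from the paper's. The paper omits the proof with a pointer to Proposition~\ref{32symprop}, whose argument is a direct matrix computation: it anchors on a distinguished $\BZ_3$-fixed summand (there, $a_0^{\ot 3}$), derives relations like $a^4 = gh^{-1}$, classifies the finitely many cube roots of $a_0^{-1}$ via eigenvalues, and rules out all but the trivial case by checking them against another orbit. Your argument instead works abstractly at the level of the group $G_{\Mthree}$: from $\gamma\cS$ being $\BZ_3^{std}$-invariant you extract $\gamma\inv\pi\gamma\in\G_\cS$, use the fact (cited in the paper from Burichenko) that $\G_\cS\simeq\FS_4$ together with the single conjugacy class of $3$-cycles in $\FS_4$ to replace $\gamma$ by $\gamma\eta\in Z_G(\pi)$ without changing $\gamma\cS$, and then identify $Z_G(\pi)=\Delta(PGL_3)\cdot\langle\pi\rangle$ by a straightforward decomposition in the semidirect product. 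This is shorter and more conceptual than the paper's template, at the cost of taking $\G_\cS\simeq\FS_4$ as input; it also yields the slightly stronger conclusion that $\tau\cS$ itself lies in $\Delta(PGL_3)\cdot\cS$, so the $\langle\tau\rangle$ factor in the statement is absorbed. The same reasoning would also streamline Proposition~\ref{32symprop} (there $\G_\cS\simeq\BZ_{12}$ has only $\pi,\pi^2$ of order $3$, and $\tau$ conjugates one to the other, giving the two-orbit answer directly). Your flagged caveat about transporting the $\FS_4$ identification through the coordinate change is the right thing to check, and as you note it is routine; the only fact actually used is that $\pi\in\G_\cS$ has order $3$, which is immediate.
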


The proof is very similar to that of Proposition \ref{32symprop}, so is omitted.

\subsection{Symmetries of  $\cS_{2fix-\BZ_3}$ }


\begin{proposition} The symmetry group of $\cS_{2fix-\BZ_3}$ is $\G_{ \cS_{2fix-\BZ_3}}=\BZ_3$.
\end{proposition}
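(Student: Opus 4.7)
The plan is to follow the template set by Propositions \ref{43symmetries} and \ref{32symprop}: use the bipartite graph invariants to eliminate transpose-like symmetries, reduce the remaining possibilities to elements of $PGL_3^{\times 3}$, and then exploit the two $\BZ_3^{std}$-fixed summands of distinct ranks to pin everything down.

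For the first step, the incidence graph in Figure \ref{fig:twofix_inc} has matching weight multisets $\{4,4,4,3,2,1\}$ on both sides, so a transpose-like symmetry is not immediately excluded combinatorially; as the preceding discussion of $\cS_{2fix-\BZ_3}$ flags, the decomposition only \emph{nearly} admits such a symmetry. I would therefore verify the obstruction directly: any candidate transpose must pair the top singleton $[1\;1\;0]$ with the bottom singleton $[1\;0\;{-}1]$ and the top doubleton $[0\;1\;1]$ with the bottom doubleton $[0\;1\;0]$, and one then checks that no element of $PGL_3^{\times 3}$ can realize the resulting matching of pairing-graph edges in Figure \ref{fig:twofix_pair}.

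Once transpositions are ruled out, it suffices, after absorbing a power of $\BZ_3^{std}$, to show that the only $\gamma=(g,h,k)\in PGL_3^{\times 3}$ preserving the decomposition is the identity. The summands \eqref{twofix_c1} and \eqref{twofix_c2} are the unique $\BZ_3^{std}$-fixed ones, and they are cubes of matrices of ranks $2$ and $1$ respectively, so each is individually fixed by $\gamma$. Writing $M_1$ and $M_2$ for these two matrices, the rank-one matrix $M_2=e_2(e_2^*-e_3^*)$ has image $[e_2]$ and kernel $\mathrm{span}(e_1,e_2+e_3)$, while $M_1$ has the complementary image and kernel. The rank-one identity $gM_2h^{-1}\otimes hM_2k^{-1}\otimes kM_2g^{-1}=M_2^{\otimes 3}$ forces $gM_2h^{-1}=\alpha M_2$ (with $\alpha$ a cube root of unity) together with its two cyclic rotations, and this in turn forces $g,h,k$ all to stabilize the line $[e_2]$. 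The analogous argument for $M_1$ forces all three to stabilize the plane $\mathrm{span}(e_1,e_2+e_3)$. Hence $g,h,k$ are simultaneously block diagonal with respect to $\BC^3=[e_2]\oplus\mathrm{span}(e_1,e_2+e_3)$.

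To finish, I would use the seven nontrivial $\BZ_3^{std}$-orbits to show $g=h=k=\Id$. I would single out the orbits whose rank-one matrices involve vectors of small multiplicity in the incidence graph, for example the top singleton $[1\;1\;0]$ or the bottom singleton $[1\;0\;{-}1]$, to constrain the action of $g,h,k$ on the $2$-plane $\mathrm{span}(e_1,e_2+e_3)$, and then use the symmetric argument on the dual side to constrain the action along $[e_2]$. The main obstacle I expect is the bookkeeping: one must track the cube-root-of-unity scalars simultaneously with the block-diagonal constraints, and a short case analysis in the style of the proof of Proposition \ref{32symprop} is likely needed to definitively exclude the nontrivial block-diagonal candidates. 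Confirming the failure of the near transpose-like symmetry is the other delicate point.
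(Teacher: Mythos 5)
Your overall plan mirrors the paper's: exploit the two $\BZ_3^{std}$-fixed summands \eqref{twofix_c1} and \eqref{twofix_c2}, which are cubes of idempotent matrices of ranks $2$ and $1$, to constrain any $(g,h,k)\in PGL_3^{\times 3}$, and separately dispose of transpose-like symmetries. Your block-diagonal observation (that $g,h,k$ must each stabilize the line $[e_2]$ and the plane $\mathrm{span}(e_1,e_2+e_3)$) is equivalent to the paper's statement that $g$ commutes with both $M_{12}$ and $M_{13}$, derived from their idempotency via $M=M^3=gM^3g^{-1}$.

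However, you leave both decisive steps unfinished. On the transpose side, you correctly note that the weight multisets on the two sides of the incidence graph coincide, so the graph alone does not exclude a transpose-like symmetry, and you propose to check that no element of $PGL_3^{\times 3}$ can realize the forced vertex matching — but you do not carry this out. The paper instead exploits the unique rank-$(2,1,1)$ orbit \eqref{twofix_c3}: invariance forces $g=h=k$ to be a specific matrix (equation \eqref{ghk}), and applying that candidate to the triple in \eqref{twofix_c4} produces a rank-one triple not occurring in the decomposition, a concrete contradiction. On the $PGL_3^{\times 3}$ side, you reduce to block-diagonal $(g,h,k)$ but then say only that ``a short case analysis in the style of the proof of Proposition~\ref{32symprop} is likely needed,'' without identifying what breaks the remaining block-diagonal freedom. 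The paper closes this gap by again invoking the unique $(2,1,1)$ triple: from $M_{14a}=gM_{14a}h^{-1}$, $M_{14b}=hM_{14b}k^{-1}$, $M_{14c}=kM_{14c}g^{-1}$ one gets $M_{14a}M_{14b}M_{14c}$ commuting with $g$, and a direct check shows the simultaneous commutant of $M_{12}$, $M_{13}$, and $M_{14a}M_{14b}M_{14c}$ is scalar. In short, the skeleton is right and consistent with the paper's, but the proposal as written is a program rather than a proof: neither of the two elimination steps is actually completed, whereas the paper supplies explicit finite computations for each.
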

\begin{proof}
The incidence and pairing graphs have no joint automorphisms, which  shows there are no additional diagonal $GL_3\subset GL_3^{\times 3}$ symmetries.
 
To show, despite the symmetry of the graphs, that  there is no transpose-like symmetry, i.e., a symmetry of the form
$x\ot y\ot z\mapsto gx^Tk\inv \ot kz^Th\inv \ot hy^Tg\inv$, first note that
  the first matrix in \eqref{twofix_c3}, call this $M_{14a}$, must satisfy
$g M_{14a}^T    = M_{14a}h$ as \eqref{twofix_c3} is the only triple with ranks $(2,1,1)$, and similarly
(using the $\BZ_3$-action), $h M_{14a}^T   = M_{14a}k$ and $k M_{14a}^T   = M_{14a}g$.
Moreover the second and third matrices in this triple, call them $M_{14b},M_{14c}$ must satisfy
$hM_{14c}^Tk\inv = M_{14b}$ and $kM_{14b}^Tg\inv=M_{14c}$, which forces
\be\label{ghk}
g=h=k=\begin{pmatrix} s&s&s\\ s&s&0\\ s&0&0\end{pmatrix}
\ene
and we may normalize $s=1$.
Now apply this to the triple in \eqref{twofix_c4}, we get the triple
$$
\begin{pmatrix}0&1&0\\0&0&1\\ 0&0&1\end{pmatrix}\ot \begin{pmatrix}1&0&0\\0&0&1\\ 0&0&0\end{pmatrix}
\ot \begin{pmatrix}0&0&0\\1&0&-1\\ 0&1&-1\end{pmatrix}
$$
which is not a triple appearing in the decomposition.

It remains to show there are no additional symmetries coming from $GL_3^{\times 3}$.
There are only two $\Z_3$ fixed points in $\cS_{2fix-\BZ_3}$, $M_{12}$ appearing in \eqref{twofix_c1} of rank one and $M_{13}$ appearing
in \eqref{twofix_c2}, so these must be fixed by 
any $(g,h,k)\in PGL(U)\times PGL(V)\times PGL(W)$-symmetry.  
Since both of these matrices are idempotent, we get, by an argument as in the proof of Proposition \ref{32symprop}, 
\begin{align}\label{mtwid} M_{12}& = M_{12}^3 = g M_{12}^3 g^{-1} = gM_{12} g^{-1},\\
\nonumber
 M_{13}& = M_{13}^3 = g M_{13}^3 g^{-1} = g M_{13} g^{-1}.
 \end{align}

There is only one tensor with ranks $(2,1,1)$ in the decomposition, so this also must be fixed. Since it is $\Z_3$ invariant, we get the $(1, 2, 1)$ tensor and the $(1,1,2)$ tensor also must be fixed. These matrices are $M_{14a},M_{14b},M_{14c}$.
Then $M_{14a} = g M_{14a} h^{-1}$, $M_{14b} = h M_{14b} k^{-1}$, $M_{14c} = k M_{14c} g^{-1}$. Combining this we get $M_{14a},M_{14b},M_{14c}$ commutes with $g$.  

Finally we  check that the only matrix which commutes with $ M_{12},M_{13}, M_{14a},M_{14b},M_{14c}$  is the identity matrix.
\end{proof}

\begin{remark}  Remarkably, the group element \eqref{ghk}   splits $\Mthree$ into the sum
of two tensors: $T_1$, the sum of \eqref{twofix_c1},\eqref{twofix_c2},\eqref{twofix_c3},\eqref{twofix_c5},\eqref{twofix_c6} which {\it is} invariant under
the transpose like action, and  the sum of  the others, call it $T_2$, which is sent to a different decomposition of $T_2$ under the action.
\end{remark}
 
\begin{proposition} In  the family of decompositions
$PGL_3^{\times 3}\rtimes (\BZ_3\rtimes \BZ_2) \cdot  \cS_{2fix-\BZ_3}$,
the set of  $ \BZ_3^{std}$-invariant decompositions
is  the  image of the diagonal $PGL_3$-action on $\cS_{2fix-\BZ_3}$ times the standard transpose $\BZ_2$.
\end{proposition}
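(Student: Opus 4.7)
The plan is to closely follow the structural argument of Proposition \ref{32symprop}, taking as input the result $\G_{\cS_{2fix-\BZ_3}}=\BZ_3^{std}$ just established. First I would reduce to $PGL_3^{\times 3}$. Any element of the family group can be written $\eta=\g\cdot\rho$ with $\g\in PGL_3^{\times 3}$ and $\rho\in \BZ_3\rtimes\BZ_2$; since $\BZ_3^{std}$ already stabilizes the decomposition, I may take $\rho\in\{\Id,\tau\}$, where $\tau$ denotes the standard transpose. The identity $\tau\pi\tau^{-1}=\pi^{-1}\in\BZ_3^{std}$ shows that $\tau$ normalizes $\BZ_3^{std}$, so $\tau\cdot \cS_{2fix-\BZ_3}$ is itself $\BZ_3^{std}$-invariant; this case is already accounted for in the stated description. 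Hence I need only handle $\rho=\Id$, namely show that $\g\in PGL_3^{\times 3}$ with $\g\cdot\cS_{2fix-\BZ_3}$ being $\BZ_3^{std}$-invariant must lie in the diagonal $PGL_3$.

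For that, I would observe that the $\BZ_3^{std}$-invariance of $\g\cdot \cS_{2fix-\BZ_3}$ translates to $\g^{-1}\BZ_3^{std}\g\subseteq \G_{\cS_{2fix-\BZ_3}}=\BZ_3^{std}$, and since both groups have order three, $\g$ normalizes $\BZ_3^{std}$ inside $G_{\Mthree}$. Because conjugation by $\g\in PGL_3^{\times 3}$ preserves the underlying $\FS_3$-factor permutation, $\g\pi\g^{-1}$ performs the same cyclic permutation of factors as $\pi$; the only element of $\BZ_3^{std}$ with this factor permutation is $\pi$ itself, so $\g$ centralizes $\pi$.

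Finally, spelling out $\g\pi=\pi\g$ with $\g=(g,h,k)$ and comparing the three slots of $gxh^{-1}\ot hyk^{-1}\ot kzg^{-1}$ after applying $\pi$ in both orders yields equations of the shape $AzB=z$ for all matrices $z$, which force $A$ and $B$ to be mutually inverse scalar matrices. In $PGL_3$ this gives $g=h=k$, completing the proof. The only real obstacle is careful bookkeeping of the semi-direct-product conventions and of the two cosets $\{\Id,\tau\}$; the computation itself is short and entirely parallel to the argument in Proposition \ref{32symprop}, which is why the authors can justifiably omit it.
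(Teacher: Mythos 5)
Your argument is correct and genuinely different from the paper's. The paper proceeds explicitly: it notes that the two $\BZ_3$-fixed cubes $M_{12}^{\ot 3},M_{13}^{\ot 3}$ and the unique rank-$(2,1,1)$ triple from \eqref{twofix_c3} must map to $\BZ_3$-fixed (resp.\ cyclic-orbit) summands of $\g\cdot\cS_{2fix-\BZ_3}$, extracts from this the commutation relations \eqref{mtwid} and $M_{14a}=gM_{14a}h^{-1}$ etc., and then concludes by the same direct computation used for $\cS_{\BZ_4\times\BZ_3}$. You instead use the just-established fact $\G_{\cS_{2fix-\BZ_3}}=\BZ_3^{std}$ as a black box, observe that $\BZ_3^{std}$-invariance of $\g\cdot\cS$ forces $\g$ to normalize $\BZ_3^{std}$, and that since $\g\in PGL_3^{\times 3}$ does not permute tensor factors it must actually centralize $\pi$, which a short computation shows happens exactly on the diagonal. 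Your route is shorter and entirely structural: it would apply verbatim to any $\BZ_3^{std}$-invariant decomposition whose full symmetry group is exactly $\BZ_3^{std}$, and it isolates cleanly why the diagonal $PGL_3$ appears. The paper's route has the advantage of paralleling the proof of Proposition~\ref{32symprop}, where the symmetry group $\BZ_4\times\BZ_3$ is strictly larger than $\BZ_3^{std}$; there the normalizer condition $\g^{-1}\pi\g\in\G_{\cS}$ no longer pins down $\g^{-1}\pi\g=\pi$ (it could be $\pi$ composed with an element of $\BZ_4^{a_0}\subset PGL_3^{\rm diag}$), so the more explicit, matrix-level analysis is unavoidable there, and the authors reuse it here for uniformity. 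Two minor points you leave implicit but which are fine: the reduction of the $\rho=\tau$ coset is really the same argument applied to $\tau\cdot\cS_{2fix-\BZ_3}$, whose symmetry group is again $\tau\BZ_3^{std}\tau^{-1}=\BZ_3^{std}$; and the easy converse inclusion (that the diagonal $PGL_3$-orbit and its $\tau$-translate really are $\BZ_3^{std}$-invariant) deserves a one-line remark.
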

\begin{proof}
The  two $\Z_3$ fixed points $M_{12},M_{13}$ must be fixed by any symmetry, and as above, for any triple $(g,h,k)$,  equation  \eqref{mtwid} still holds
(in fact for $g$ replaced by $h$ or $k$ as well).

Similarly the triple \eqref{twofix_c3} and its $\BZ_3$-translates also must be fixed.  
Then $M_{14a} = g M_{14a} h^{-1}$, $M_{14b} = h M_{14b} k^{-1}$, $M_{14c} = k M_{14c} g^{-1}$. Combining this we get $M_{14a}M_{14b}M_{14c}$ commutes with $g$. We conclude
as above. 
 \end{proof}

\section{Configurations of points in projective space}\label{configsect}
In the decompositions,   
vectors   appear 
  tensored with   other vectors, so they are really only
  defined up to scale  (there is only a \lq\lq global scale\rq\rq\ for each term).
  This suggests using points in projective space, and only later taking scales into account. 

Towards our goal of building new decompositions, we would like to 
describe existing decompositions in terms of   simple building blocks.
The standard cyclic  $\BZ_3$ invariant decompositions of $\Mn$  naturally come in
the restricted families parameterized
by the diagonal $PGL_n\subset PGL_n^{\times 3}$.
Thus, when we examine, e.g., the points in $\BP U=\pp{n-1}$ appearing in the rank
one terms in  a decomposition, we should
really study the set of points up to $PGL_n$-equivalence, call such a {\it configuration}.
Identifying configurations will also facilitate comparisons between known decompositions. 

The simplest configuration is $n$ points in $\BC^n$ that form a basis, as occurs with
the standard decomposition.   All known   decompositions
of size less than $n^3$  use more than
$n$ points. The next simplest is a collection of $n+1$ points in general linear position, i.e., 
a collection of points such that any subset of $n$ of them  forms a basis.
Call such a {\it framing} of $\pp{n-1}$.  Just as all bases
are $PGL_n$-equivalent to the standard basis, all  framings,
as points in projective space,  are equivalent to:
$$
\begin{pmatrix} 1\\ 0 \\ 0\\ \vdots \\0\end{pmatrix},\
\begin{pmatrix} 0\\ 1 \\ 0 \\  \vdots \\0\end{pmatrix},\cdots,
\begin{pmatrix} 0\\ 0 \\    \vdots \\0 \\ 1\end{pmatrix}, \
\begin{pmatrix}  1\\ 1 \\    \vdots \\1 \\ 1\end{pmatrix}.
$$

We focus on the case of  $\pp 2$.

\subsection{The case of   $\pp 2$}\label{defaultpinning} The group   $PGL_3$  acts simply transitively  on the set of   $4$-ples of points  in $\pp 2$  in general linear position
(i.e.\ such that  vectors associated to any three of them form a basis).

Start with any $4$-ple of points in general linear position. 
We will call the following choice, the {\it default framing}:
$$
u_{1} =\begin{pmatrix} 1\\ 0\\ 0\end{pmatrix}, \ 
u_{2}=\begin{pmatrix} 0\\ 1\\ 0\end{pmatrix}, \ 
u_{3}=\begin{pmatrix} 0\\ 0\\ 1\end{pmatrix}, \ 
u_{4}=\begin{pmatrix} -1\\  -1\\ -1\end{pmatrix}. 
$$
Note that $u_1,u_2,u_3$ is the standard basis and $u_4$ is chosen such that $u_1+u_2+u_3+u_4=0$.  

The $\{ [u_j]\}$  determine $6$ lines in $\BP U$, those going through
pairs of points,  that we consider as points in $\BP U^*$.

For the default framing, representatives of these are:
\begin{align*}
&v_{12} =(0,0,1), \ \ v_{13} =(0,1,0), \ \ v_{14} =(0, 1,-1),\\
& v_{23} =(-1,0,0), \ \ v_{24} =(-1,0,1), \ \ v_{34} =(1,-1,0).
\end{align*}
Here $[v_{i j}]$ is the line in $\pp 2$, considered as a point in the dual space $\pp{2*}$,  through the points $[u_i]$ and $[u_j]$ in $\pp 2$  (or dually, the point of intersection
of the two lines $[u_i]$, $[u_j]$ in $\pp{2*}$).  Algebraically this means $v_{ij}(u_i)=0=v_{ij}(u_j)$.  

 The choices of scale made  here are useful for the decomposition $\cS_{\BZ_4\times \BZ_3}$  
because they make the $\BZ_4$ action easier to write down. 
They are such    that $v_{i,i+1}(u_{i+2})=1$, $v_{i,i+1}(u_{i+3})=-1$ (indices considered mod four).
This has the advantage of $v_{i+1,i+2}=a_0^{-i} v_{12}$ where $a_0$ is as in \eqref{aodef}. 
For $v_{13}$  there was no obvious choice of sign, and  we chose $v_{24}={a_0}\inv  (v_{13})$.

The $v_{ij}$'s constitute   two $\BZ_4$-orbits: the $v_{i,i+1}$'s
which consist of four vectors, and the $v_{i,i+2}$'s of which there are two.

The $v_{i,j}$ in turn determine their new points of intersection:
$$
u_{12,34} =\begin{pmatrix} 1\\ 1\\ 0\end{pmatrix}, \ 
u_{13,24}=\begin{pmatrix} 1\\ 0 \\ 1\end{pmatrix}, \ 
u_{14,23}=\begin{pmatrix} 0\\ 1\\ 1\end{pmatrix}.
$$

\begin{figure}[!htb]\begin{center}\label{basisconf}
\includegraphics[scale=.4]{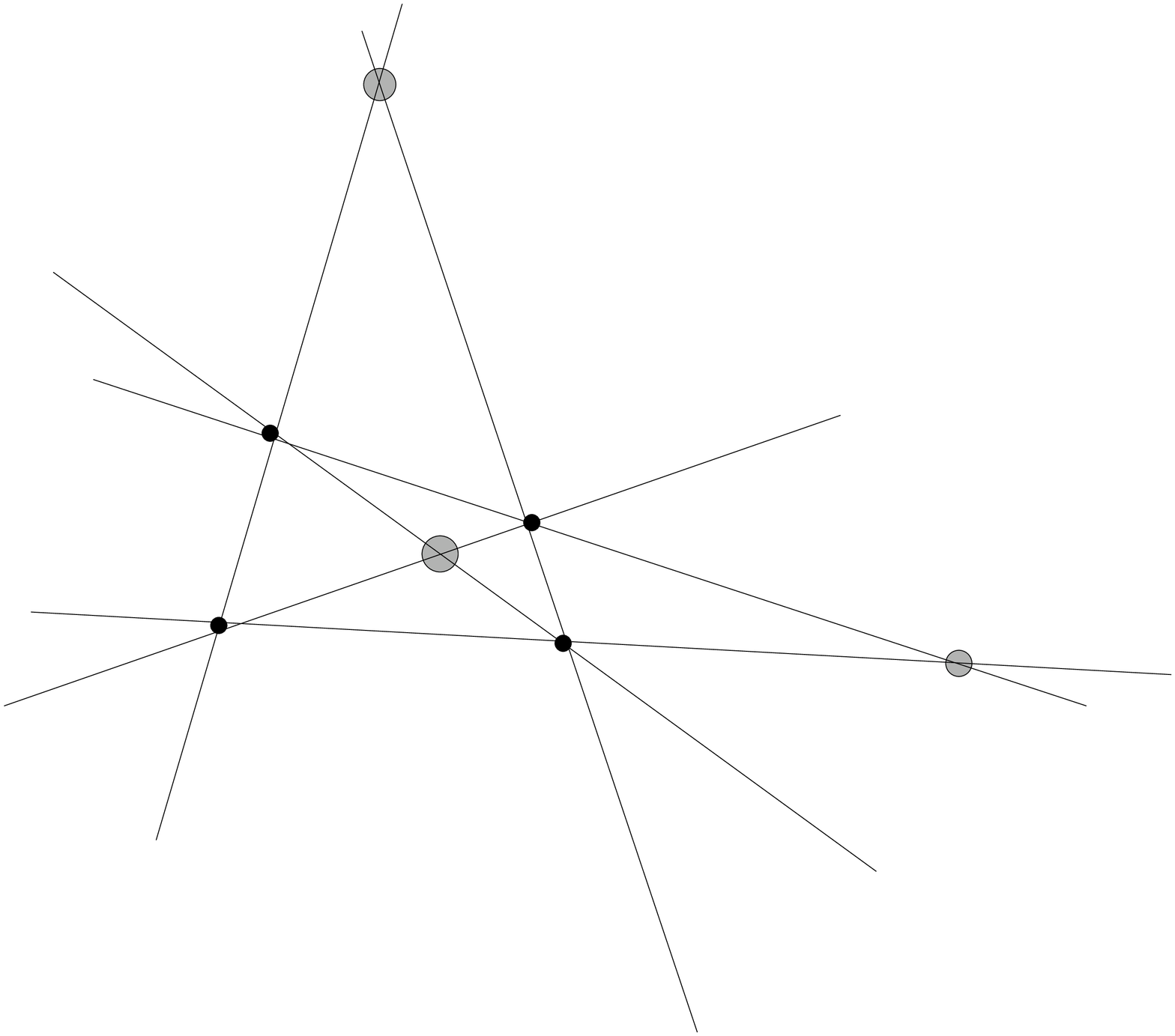}
\caption{\small{Points and lines from the default framing of $\pp 2$. Original four points are black dots, three new are shaded dots}}  
\end{center}
\end{figure}

These  determine further lines 
$$
v_{(12,34),(13,24)} =(-1, 1,1), \ \    v_{(12,34),(14,23)} =(1,-1,1), \ \ v_{(13,24),(14,23)} =(1,1,-1),
$$
which determine
$$
u_{34,(13,24|14,23)} =\begin{pmatrix} 1\\ 1\\ 2\end{pmatrix}, \ 
u_{24,(12,34|14,23)}=\begin{pmatrix} 1\\ 2 \\ 1\end{pmatrix}, \ 
u_{23,(12,34|13,24)}=\begin{pmatrix} 2\\ 1\\ 1\end{pmatrix},
$$
$$
u_{12,(12,34|13,24) } =\begin{pmatrix} 1\\  1\\ 0\end{pmatrix}, \ 
u_{13,(12,34|13,24) }=\begin{pmatrix} 1\\ 0 \\  1\end{pmatrix}, \ 
u_{14,(12,34|13,24) }=\begin{pmatrix} 0\\ 1\\  1\end{pmatrix}.
$$
This process continues, giving rise to an infinite collection of points  but in practice only vectors from the first $3$
rounds appeared in decompositions.

\subsection{Point-line configuration for $\cS_{\BZ_4\times \BZ_3}$}

The rank one elements appearing in $\cS_{\BZ_4\times \BZ_3}$ consist of points from
three rounds of points obtained from the default configuration. All points appear except 
that $u_{13|24}$ is missing (the orbit under $\BZ_4$ of $u_{12|34}$ is
$\{ u_{12|34}, u_{14|23}\}$).

Here is the decomposition $\cS_{\BZ_4\times \BZ_3}$ in terms of the  points  from \S\ref{defaultpinning}:

\begin{align}
\label{aa1x} \Mthree =& -a_0^{\ot 3} \\
\label{aa2x} &+\BZ_4\cdot (u_{1}  v_{23})^{\ot 3}\\
\label{aa3x}&+ \BZ_2 \cdot (u_2v_{23}-u_{12|34}v_{23})^{\ot 3}  \\
\label{aa6x} &+\BZ_4\cdot  ( u_{2}  v_{24})^{\ot 3} \\
\label{aa15x} &  +
\BZ_3\times \BZ_4\cdot (  u_{2}  v_{23}\ot  u_{4}  v_{14}\ot u_{12|34}  v_{13}).
\end{align}

\subsection{Point-line configuration for $\cS_{Lad-\BZ_3}$}
Thanks to the transpose-like  symmetry, it is better to label points in the dual  space
by their image under transpose rather than annihilators, to make the transpose-like symmetry
more transparent. Points:
$$
u_1=\begin{pmatrix} 1\\ 0\\ 0\end{pmatrix}, \ u_2=\begin{pmatrix} 0\\ 1\\ 0\end{pmatrix}, \ 
u_3=\begin{pmatrix} 0\\ 0\\ 1\end{pmatrix}, \ 
u_{12}=\begin{pmatrix} 1\\ -1\\ 0\end{pmatrix}, \ u_{23}=\begin{pmatrix} 0\\ 1\\ -1\end{pmatrix}.
$$
\begin{align*}
&v_{1}=(1,0,0), \ v_{2}=(0,1,0), \ v_{3}=(0,0,1), \\ 
&v_{12}=(1,1,0), \ v_{23}=(0,1,1).
\end{align*}

This collection of points  has a $\BZ_2$-symmetry generated by $\t_{13}$ which swaps the two lines.

\begin{figure}[!htb]\begin{center}\label{laderpic}
\includegraphics[scale=.4]{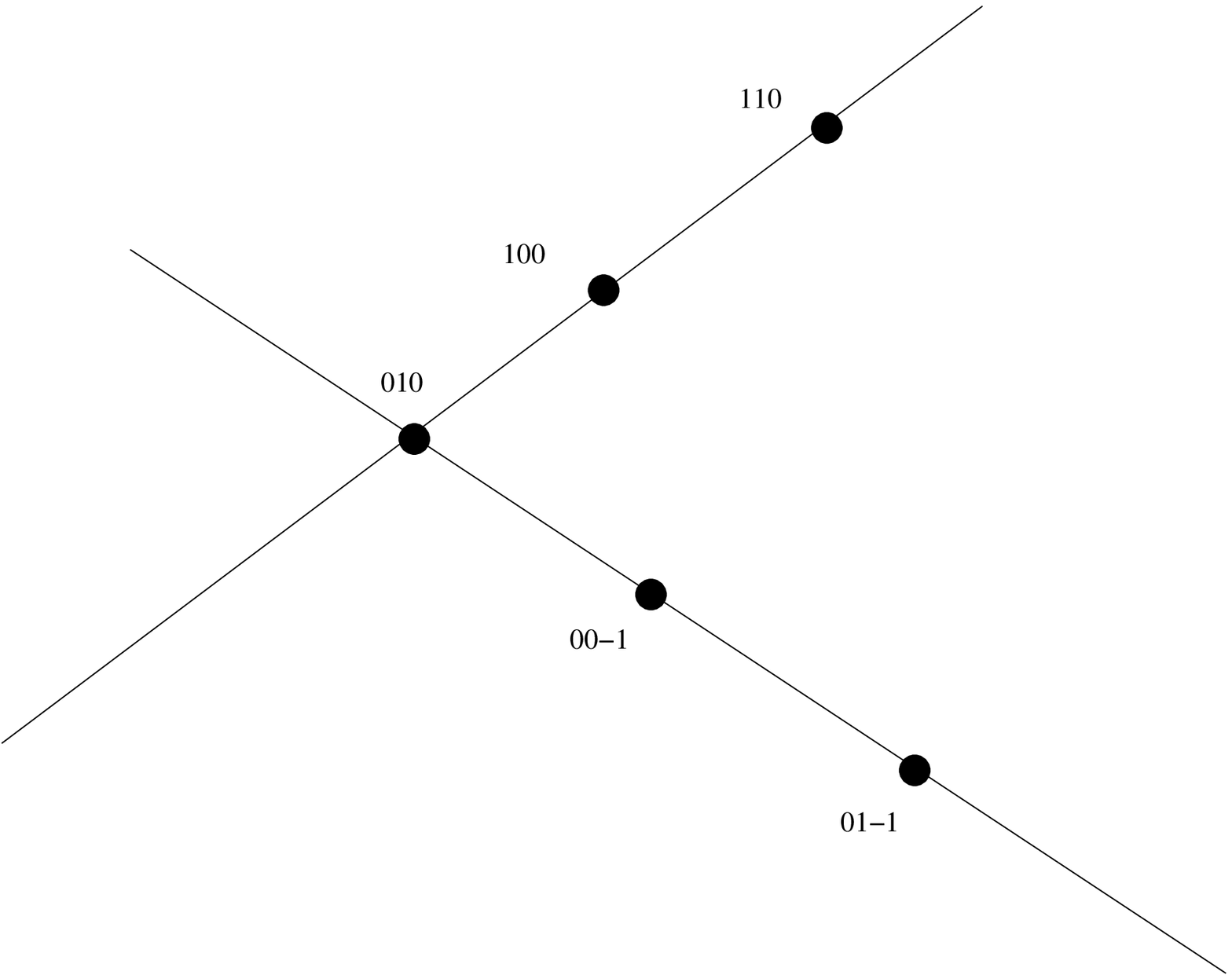}
\caption{\small{Rank one $\BP U$-points appearing in $\cS_{Lad-\BZ_3}$}}  
\end{center}
\end{figure}

In order to express $\cS_{Lad-\BZ_3}$ in terms of just these points, we   write the decomposition in
terms of the $\BZ_3\rtimes \BZ_2^{\zeta}$-orbits.

\begin{align}\Mthree =
\label{l1}  &   ( u_2  v_{2})^{\ot 3}\\
\label{l2} & + ( u_3  v_{3})^{\ot 3} \\
\label{l3} &+(u_{12}   v_{1})^{\ot 3} \\
\label{l4} &+( u_1 v_{12})^{\ot 3}\\
\label{l5} &+ ( u_{2}  v_{1}-u_1  v_{12} )^{\ot 3}
\\
\label{l6} & +
\BZ_3\cdot  (u_{1 }  v_{3})
\ot (u_{3}  v_{1})
\ot (u_{1}  v_{1}) 
\\
\label{l7} &+
\BZ_3\cdot  (u_{23 }  v_{1})
\ot (u_{12}  v_{3})
\ot (u_{23}  v_{3}) 
\\
\label{l8} &+\BZ_3\cdot  
(u_{3}  v_{12})
\ot (u_{1}  v_{23})
\ot ( u_{3}  v_{23}) 
\\
\label{l9} &+\BZ_3\cdot  ( u_{2}  v_{3}-u_{23}  v_{1} )\ot 
( u_{1} v_{2}-u_{12}  v_{3})\ot 
( u_{3 }  v_{2}-u_{23}  v_{3}) \\
\label{l10} & +\BZ_3\rtimes \BZ_2^{\zeta}\cdot  (u_{23}   v_{12} +u_2  v_3- u_1 v_{23} )
\ot (u_{2}  v_{3})
\ot (u_{3}  v_{2})   .
\end{align}

\subsection{Point-line arrangement for $\cS_{2fix-\BZ_3}$}

 Despite  the lack of a transpose-like symmetry, both sets of points
are the $6$ points corresponding to lines dual to the standard configuration of four points.
Again, this illustrates how the decomposition nearly has such a symmetry, and could likely be modified to have such.

\begin{figure}[!htb]\begin{center}
\includegraphics[scale=.4]{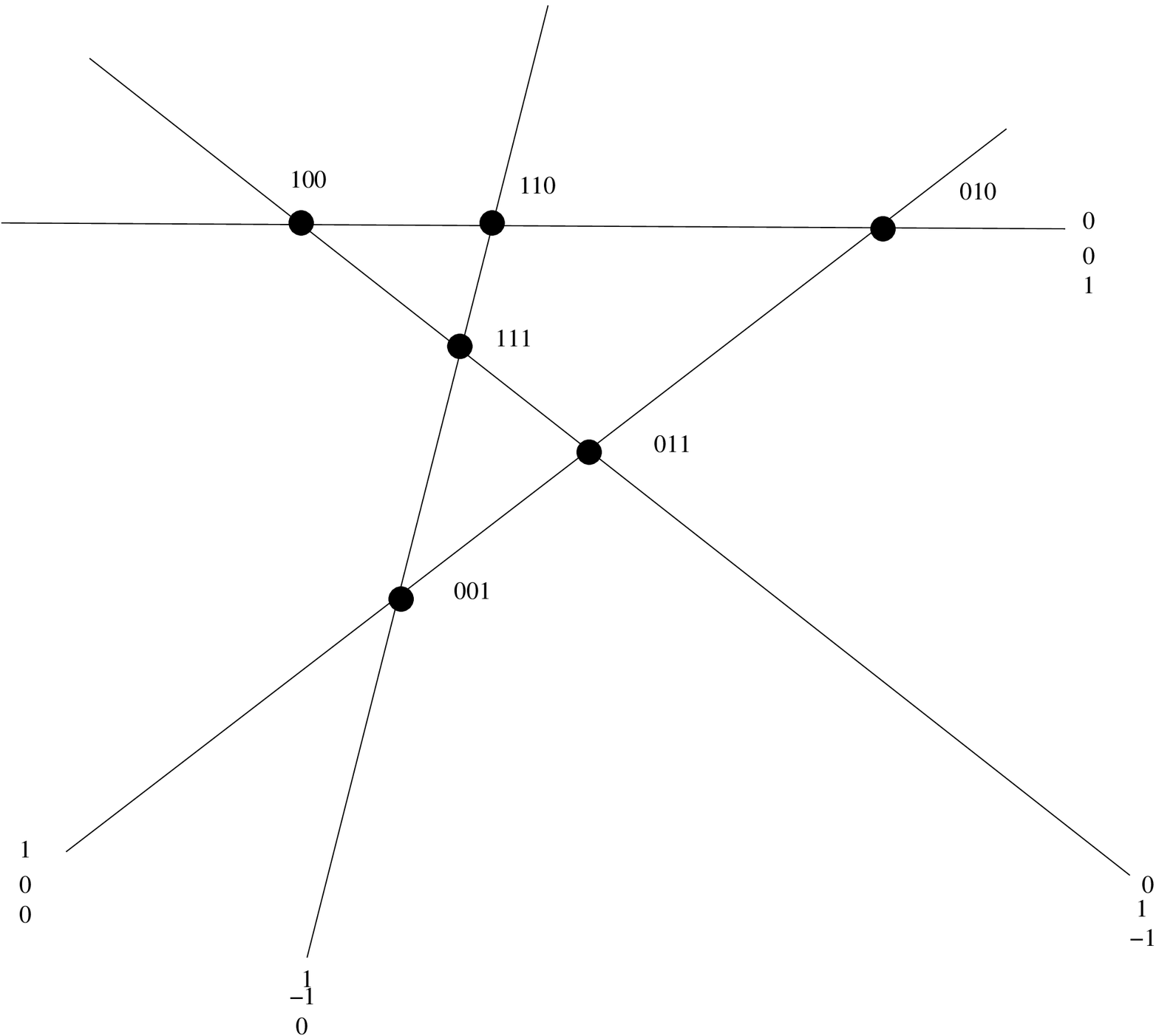}
\caption{\small{Points from $\cS_{2fix-\BZ_3}$ rank one matrices appearing in $\BP U$ and $\BP U^*$, with the latter expressed as lines in $\BP U$}}  
\end{center}
\end{figure}

\section{Spaces of $\G$-invariants for various $\G$}\label{repsect}
The space of $\BZ_3$-invariants in $A^{\ot 3}$ is $S^3A\op \La 3 A$.  
To see this, as a  $GL(A)\times \FS_3$-module we have the decomposition
$A^{\ot 3}=S^3A\ot [3]\op S_{21}A\ot [21]\op \La 3 A\ot [111]$.
Then $\BZ_3$ acts trivially on the trivial representation $[3]$ and trivially
on the sign representation $[111]$ (as the three cycle is even), and
$[21]$ decomposes into its $e^{\frac{2\pi i}3},e^{\frac{4\pi i}3}$ eigenspaces under $\BZ_3$. 
Thus, if $\tdim A=m$,   the space of invariants $(A^{\ot 3})^{\BZ_3}$ has dimension $\frac 13(m^3+2m)$, so restricting the search to $\BZ_3$-invariant
decompositions reduces the search size by about a factor of $3$. (In our case, $m=n^2$.)

The $\BZ_3$-fixed triples all lie in $S^3A$, so one cannot have all terms of a decomposition
individually $\BZ_3$-fixed.
{\it A-priori} there could be $2+3k$ $\BZ_3$-fixed points for $0\leq k<7$. We found decompositions with $2,5$, and $11$ $\BZ_3$-fixed points,
i.e., $k=0,1,3$.

\subsection{$\BZ_{n+1}$-invariants}
Consider the diagonal $\BZ_{n+1}\subset PGL_n\subset PGL_n^{\times 3}$ invariants: Each of $U^*,V$ decomposes into  $n$ one-dimensional representations for
$\BZ_{n+1}$,  corresponding to the  eigenvalues $\o,\o^2\hd \o^{n}$ where $\o =e^{\frac{2\pi i}{n+1}}$.
Write $(U^*)_j,V_j$ for the eigenspace corresponding to $\o^j$.
Then, adding indices modulo $n+1$, and letting $u\in [n]$, 
\begin{align*}
A_0=\bigoplus_{t=1}^n U^*_t\ot V_{n+1-t},\\
A_u=\bigoplus_{i=1, i\neq u}^n U^*_i\ot V_{u-i},
\end{align*}
so $\tdim A_0=n$, and   $\tdim A_u=n-1$,
where each $A_i$ is a $\BZ_{n+1}$-isotypic component of~$A$.
Note that $A_0$ is spanned by the powers of $a_0$ of \eqref{aodef}.
 
The space of $\BZ_{n+1}$ invariants in   $A\ot B\ot C$  is spanned by the spaces  $A_{\a}\ot B_{\b}\ot C_{\g}$ with $\a+\b+\g\equiv 0\tmod n+1$.
Let $1\leq i,j\leq n$,  we have the following dimensions:
\begin{center}\begin{tabular}{c|c|c|c}  
space & dimension & number  of   such& total   contribution\\
\hline
$A_{0}\ot B_{0}\ot C_0$ &\ \  $n^3$ & $1$ &\ \  $n^3$ \\
$A_{0}\ot B_{j}\ot C_{n+1-j}$  plus cyclic perms  &$n(n-1)^2$ & $3n$ &\ \   $3n^2(n-1)^2$\\
$A_{i}\ot B_{j}\ot C_{n+1-j-i}$,\    $i+j\neq n+1$ &\ \   $(n-1)^3$ & $n(n-1)$ &\ \   $n(n-1)^4$
\end{tabular}\end{center}
 
Thus 
\begin{proposition} The space of   $\BZ_{n+1}\subset PGL_n\subset PGL_n^{\times 3}$ invariants for the diagonal $\BZ_{n+1}$  in $\BC^{n^2}
\ot \BC^{n^2}\ot \BC^{n^2}$
is $n^5- n^4+ n^3- n^2+ n$. In particular, when $n=2,3,4$ these dimensions are respectively $22,183,820$.
\end{proposition}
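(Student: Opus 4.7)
The plan is to identify $A\ot B\ot C$ as a $\BZ_{n+1}$-representation and sum the dimensions of the isotypic triples that land in the trivial component. Essentially the entire proof amounts to justifying the table already given in the statement and summing its last column, so the only real work is (a) verifying the isotypic decomposition of $A$ and (b) case-analyzing which triples $(\alpha,\beta,\gamma)$ satisfy $\alpha+\beta+\gamma\equiv 0\pmod{n+1}$.

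First I would justify the decomposition of $A$. The natural generalization of $a_0$ from \eqref{aodef} is the $n\times n$ companion matrix of $1+x+\cdots+x^n=(x^{n+1}-1)/(x-1)$; its eigenvalues are the nontrivial $(n+1)$-th roots of unity $\omega,\omega^2,\ldots,\omega^n$ with $\omega=e^{2\pi i/(n+1)}$, it satisfies $a_0^{n+1}=\Id$, and it has order exactly $n+1$, so generates a $\BZ_{n+1}\subset GL_n\subset PGL_n$. Diagonalizing the action on $V$ and on $U^*$ yields one-dimensional eigenspaces $V_j$ and $U^*_j$ ($j\in[n]$). On $A=U^*\ot V$ the $\omega^u$-eigenspace is $\bigoplus_{i+j\equiv u\,(n+1)} U^*_i\ot V_j$, and a straightforward count of admissible pairs $(i,j)\in[n]^2$ modulo $n+1$ gives $\dim A_0=n$ and $\dim A_u=n-1$ for $u\in[n]$, matching the formulas in the statement. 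Because the action is diagonal, $B$ and $C$ carry the identical decomposition.

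Second, since the $\BZ_{n+1}$-invariants in a tensor product of representations of a cyclic group are the sum of all triples whose characters multiply to the trivial one, the invariant subspace is $\bigoplus_{\alpha+\beta+\gamma\equiv 0\,(n+1)} A_\alpha\ot B_\beta\ot C_\gamma$. I would organize this sum by the number of zero indices. \emph{All zero} contributes the single summand $A_0\ot B_0\ot C_0$ of dimension $n^3$. \emph{Exactly one zero} gives $3n$ triples (three cyclic positions for the zero slot, then $n$ choices of a nonzero index with the third forced), each of dimension $n(n-1)^2$. \emph{No zeros} gives pairs $(\alpha,\beta)\in[n]^2$ with $\alpha+\beta\not\equiv 0\pmod{n+1}$, namely $n^2-n=n(n-1)$ such pairs (subtracting the $n$ diagonal pairs $\alpha+\beta=n+1$ that would force $\gamma=0$), each contributing $(n-1)^3$.

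Finally I would add and expand: $n^3+3n^2(n-1)^2+n(n-1)^4=n^5-n^4+n^3-n^2+n$, recovering the claimed formula; specializing $n=2,3,4$ gives $22,183,820$. I do not anticipate a real obstacle. The only step where care is needed is the no-zeros case, where one must remember that $\gamma\equiv-(\alpha+\beta)$ is forced to lie in $[n]$ precisely when $\alpha+\beta\not\equiv 0\pmod{n+1}$; this is exactly the $i+j\neq n+1$ condition appearing in the statement's third row.
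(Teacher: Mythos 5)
Your proof is correct and follows essentially the same route as the paper: decompose $A=U^*\ot V$ into $\BZ_{n+1}$-isotypic components $A_0,\dots,A_n$ of dimensions $n,n-1,\dots,n-1$, then sum the dimensions of the triple products $A_\alpha\ot B_\beta\ot C_\gamma$ over $\alpha+\beta+\gamma\equiv 0\pmod{n+1}$, organized (as the paper's table does) by how many of the three indices are zero. The only thing you add beyond the paper's exposition is the explicit identification of the generator as the companion matrix of $1+x+\cdots+x^n$, which the paper leaves implicit; the counting and the final polynomial identity are identical.
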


\subsection{ $\BZ_{n+1}\times \BZ_3$-invariants} We look inside $S^3A\op \La 3A$ for $\BZ_{n+1}$-invariants.
Write
%
\begin{align*}
S^3A&=S^3(A_0+\cdots +A_n) = S^3A_0 \quad\op\quad \bigoplus_{j=1}^n S^3A_j \quad\op\quad \bigoplus_{j=1}^n A_0\ot S^2A_j \quad\op\quad \bigoplus_{j=1}^n A_j\ot S^2A_0\\
&\op\quad \bigoplus_{j,k=1, j \neq k}^n A_j\ot S^2A_k \quad\op\quad \bigoplus_{1 \leq j<k\leq n} A_0\ot A_j \otimes A_k \quad\op\quad \bigoplus_{1\leq i<j<k\leq n}A_i\ot A_j\ot A_k.
\end{align*}
The subspace of $\BZ_{n+1}$-invariants is (here all indices run from $1$ to $n$)
\begin{align*}
& (S^3A)^{\BZ_{n+1}} = S^3A_0 \quad\op\quad \bigoplus_{j\mid\,   3j\equiv 0 \tmod n+1 }  S^3A_j \quad\op\quad
\bigoplus_{j\mid\,   2j\equiv 0 \tmod n+1 } A_0\ot S^2A_j \quad\op\quad 0 \\
&\op
\bigoplus_{(j,k)\mid\,   j+2k\equiv 0 \tmod n+1; j\neq k} A_j\ot S^2A_k
 \quad\op\quad
\bigoplus_{i\mid\,  i<n+1-i} A_0\ot A_i\ot A_{n+1-i}
\quad\op\quad \bigoplus_{(i,j)\mid\,  i<j<n+1-i-j \atop \text{ or } i<j<2n+2-i-j<n+1}A_i\ot A_j\ot A_{n+1-i-j}.
\end{align*}

The dimensions of the summands of the various types are respectively
$$
\binom{n+2}3,\ \binom{n+1}3,\ n\binom{n}2,\ 0 ,\ (n-1)\binom{n}2,\ n(n-1)^2,\ (n-1)^3 .
$$

 Similarly, the space of $\BZ_{n+1}$-invariants in  $\La 3 A$ is
\begin{align*}
(\La 3A)^{\BZ_{n+1}} =&\La 3A_0\op \bigoplus_{j\mid\,   3j\equiv 0 \tmod n+1 }  \La 3A_j \op 
\bigoplus_{j\mid\,   2j\equiv 0 \tmod n+1 } A_0\ot \La 2A_j 
 \op 0 \\
 &\op
\bigoplus_{(i,j)\mid\,   i+2j\equiv 0 \tmod n+1; i \neq j} A_i\ot \La 2A_j
\op
\bigoplus_{i\mid\,  i<n+1-i} A_0\ot A_i\ot A_{n+1-i}
\op \bigoplus_{(i,j)\mid\,  i<j<n+1-i-j \atop \text{ or } i<j<2n+2-i-j<n+1}A_i\ot A_j\ot A_{n+1-i-j}.
\end{align*}
The dimensions of the summands of the various types are respectively
$$
\binom{n }3,\ \binom{n-1}3,\ n\binom{n-1}2,\ 0,\ (n-1)\binom{n-1}2,\ n(n-1)^2,\ (n-1)^3 .
$$

In both cases, 
the number of terms of each type depends on divisibility properties of $n$.
When $n=2$, the summands are 
$$
S^3A_0\op S^3A_1\op S^3A_2\op (A_0\ot A_1\ot A_2)^S \op (A_0\ot A_1\ot A_2)^{\Lambda},
$$
for a total dimension of $8+2=10$. (Here the $S,\Lambda$ superscripts are whether the factor is in $S^3$ or $\La 3$.)

When $n=3$, the summands are
\begin{align*}
&S^3A_0\op A_0\ot S^2A_2\op  A_2\ot S^2A_1\op
A_2\ot S^2A_3\op  (A_0\ot A_1\ot A_3)^{S}\\
&\op \La 3 A_0 \op A_0\ot \La 2A_2\op A_2\ot \La 2 A_1
\op A_2\ot \La 2 A_3  \op (A_0\ot A_1\ot A_3)^{\Lambda}, 
\end{align*}
for a total dimension of $43+20=63$.

$\Mthree$ has a nonzero projection onto each factor except the one-dimensional  $\La 3 A_0$.

When $n=4$ the summands are
\begin{align*}
&S^3A_0\op A_1\ot S^2A_2\op A_2\ot S^2A_4\op    A_3\ot S^2A_1 \op A_4\ot S^2A_3 \op (A_0\ot A_1\ot A_4)^S \op (A_0\ot A_2\ot A_3)^S
\\
&\op 
\La 3A_0\op A_1\ot \La 2A_3\op A_2\op \La 2A_4\ot A_3\ot \La 2A_1 \op A_4\ot \La 2A_3 \op (A_0\ot A_1\ot A_4)^{\Lambda} \op (A_0\ot A_2\ot A_3)^{\Lambda},
\end{align*}
for a total dimension of $164+112=276$, compared with the na\"\i ve search space dimension of $16^3=4096$ and the $\BZ_3$-invariant search 
space of dimension $1376$.

In summary:
\begin{proposition} The dimension of the space of $\BZ_{n+1}\times \BZ_3$ invariants in $(U^*\ot U)^{\ot 3}$ is
respectively of dimensions $10$, $63$, and $276$ when $n=2,3,4$. 
\end{proposition}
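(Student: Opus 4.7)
The plan is a direct computation using the decompositions of $(S^3 A)^{\BZ_{n+1}}$ and $(\La 3 A)^{\BZ_{n+1}}$ already displayed. Since $(A^{\ot 3})^{\BZ_3} = S^3 A \op \La 3 A$ and the $\BZ_{n+1}$-action respects this splitting (because the $\BZ_{n+1}$ and $\FS_3$ actions on $A^{\ot 3}$ commute), the desired dimension is $\dim(S^3 A)^{\BZ_{n+1}} + \dim(\La 3 A)^{\BZ_{n+1}}$, which reduces to enumerating and summing the contributions of the seven summand types exhibited above (with the $A_j \ot S^2 A_0$ type and its $\La$-analogue trivially zero).

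First I would enumerate, for each $n \in \{2, 3, 4\}$, the indices and pairs satisfying the congruences $3j \equiv 0$, $2j \equiv 0$, $j + 2k \equiv 0$ (with $j \neq k$), together with the single-index condition $i < n+1-i$ and the triple-sum condition $i + j + k \equiv 0 \pmod{n+1}$ with its strict ordering. The arithmetic of $n+1$ dictates which types survive: for $n = 2$ (so $n+1 = 3$) both $j = 1, 2$ satisfy $3j \equiv 0$ but nothing satisfies $2j \equiv 0$; for $n = 3$ (so $n+1 = 4$) the $2$-torsion element $j = 2$ appears; and for $n = 4$ (so $n+1 = 5$) the primality of $5$ kills both auxiliary types while leaving the maximum number of $(j,k)$ pairs with $j + 2k \equiv 0$. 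In all three cases one verifies that no strictly increasing $i < j < k$ in $\{1, \ldots, n\}$ satisfies either clause of the triple ordering condition, so the triple summand type never contributes.

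Substituting $\dim A_0 = n$, $\dim A_j = n - 1$ for $j \geq 1$, and the standard formulas for $\dim S^k$ and $\dim \Lambda^k$, one then adds the contributions. For $n = 2$ the surviving summands give $(4 + 2 + 2) + (0 + 0 + 2) = 10$; for $n = 3$, $(10 + 9 + 12 + 12) + (1 + 3 + 4 + 12) = 43 + 20 = 63$; for $n = 4$, $(20 + 72 + 72) + (4 + 36 + 72) = 164 + 112 = 276$.

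There is no substantive obstacle; the only care needed is bookkeeping. In particular, the pairs satisfying $j + 2k \equiv 0 \pmod{n+1}$ with $j \neq k$ must be enumerated without double-counting (since $j$ and $k$ play asymmetric roles in $A_j \ot S^2 A_k$), and the disjunction $i<j<n+1-i-j$ or $i<j<2n+2-i-j<n+1$ in the triple-summand ordering condition must be checked for both clauses before concluding that this type contributes nothing.
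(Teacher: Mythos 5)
Your proposal is correct and follows exactly the route the paper itself takes: the proposition is simply the tally of the $\BZ_{n+1}$-isotypic decompositions of $(S^3 A)^{\BZ_{n+1}}$ and $(\La 3 A)^{\BZ_{n+1}}$ already displayed in the section, and your enumeration of which summand types survive for each $n$ and your dimension counts ($8+2$, $43+20$, $164+112$) all agree with the paper.
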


\section{Eigenvalues}\label{evalsect} When we deal with a restricted family   $PGL_n\cdot \cS$,  
it makes sense to discuss eigenlines  and eigenvalues of the terms appearing.
These also facilitate determining if two decompositions lie in the same family, beyond the graphs. 

\begin{table}
\begin{tabular}{|c|c|c|}
\hline
 \zfzt & Char.~Poly. & Count \\
 \hline
\multirow{3}{*}{symmetric}
 & $t^2(t - 1)$ & 6 \\
 & $t + t^2 + t^3$ & 4 \\
 & $t - t^2 + t^3 - 1$ & 1 \\
\hline\multirow{1}{*}{triples}
 & $\left\{ t^3,t^3,t^3 \right\}$ & 4 \\
\hline
\end{tabular}
\caption{Characteristic polynomials of matrices appearing in \zfzt}
\end{table}

\begin{table}
\begin{tabular}{|c|c|c|}
\hline
 \lad & Char.~Poly. & Count \\
 \hline
\multirow{2}{*}{symmetric}
 & $t^2(t - 1)$ & 4 \\
 & $t + t^2 + t^3$ & 1 \\
\hline\multirow{3}{*}{triples}
 & $\left\{ t^2(t - 1),t^3,t^3 \right\}$ & 3 \\
 & $\left\{ t^3,t^3,t + t^2 + t^3 \right\}$ & 1 \\
 & $\left\{ t^3,t^3,t - t^2 + t^3 - 1 \right\}$ & 2 \\
\hline
\end{tabular}
\caption{Characteristic polynomials of matrices appearing in \lad}
\end{table}

\begin{table}
\begin{tabular}{|c|c|c|}
\hline
 \twofix & Char.~Poly. & Count \\
 \hline
\multirow{2}{*}{symmetric}
 & $t(t - 1)^2$ & 1 \\
 & $t^2(t - 1)$ & 1 \\
\hline\multirow{5}{*}{triples}
 & $\left\{ t^3,t^3,t^3 \right\}$ & 1 \\
 & $\left\{ t^2(t - 1),t^2(t - 1),t + t^2 + t^3 \right\}$ & 1 \\
 & $\left\{ t^2(t - 1),t^3,t^3 \right\}$ & 2 \\
 & $\left\{ t^2(t - 1),t^2(t - 1),t^3 \right\}$ & 2 \\
 & $\left\{ t^2(t - 1),t^2(t - 1),t^2(t + 1) \right\}$ & 1 \\
\hline
\end{tabular}
\caption{Characteristic polynomials of matrices appearing in \twofix}
\end{table}

\bibliographystyle{amsplain}
 
\bibliography{Lmatrix}

\appendix

\section{Additional decompositions}\label{extraexs}
What follows are three additional decompositions with their graphs and eigenvalue tables.
 
\subsection{A decomposition with five $\BZ_3$-fixed points and no diagonal $GL_3$-symmetry}
The lack of extra symmetry may be easily deduced from the incidence graph.

\begin{tiny}
\begin{align}
\Mthree
&= \label{addtlone_c1} \begin{pmatrix} 0 & 0 & 0 \\ 0 & 0 & 0 \\ 0 & -1 & 1 \end{pmatrix}^{\ot 3} 
	+ \begin{pmatrix} 1 & 0 & 0 \\ 1 & 0 & 0 \\ 0 & 0 & 0 \end{pmatrix}^{\ot 3} 
	+ \begin{pmatrix} 1 & -1 & 0 \\ 0 & 0 & 0 \\ 0 & -1 & 0 \end{pmatrix}^{\ot 3} \\
&+ \label{addtlone_c4} \begin{pmatrix} 0 & 0 & 0 \\ 0 & 1 & 0 \\ 0 & 1 & 0 \end{pmatrix}^{\ot 3} 
	+ \begin{pmatrix} -1 & 1 & 0 \\ -1 & 0 & 0 \\ 0 & 1 & 0 \end{pmatrix}^{\ot 3} \\
&+ \label{addtlone_c6} \BZ_3^{std} \cdot 
	\begin{pmatrix} 0 & 0 & 0 \\ 0 & 0 & 1 \\ 0 & 0 & 0 \end{pmatrix} \ot 
	\begin{pmatrix} 0 & 0 & 0 \\ 0 & 0 & 0 \\ -1 & 0 & 0 \end{pmatrix} \ot 
	\begin{pmatrix} 0 & -1 & 1 \\ 0 & 0 & 0 \\ 0 & 0 & 0 \end{pmatrix} \\ 
&+ \label{addtlone_c7} \BZ_3^{std} \cdot 
	\begin{pmatrix} 0 & 0 & 0 \\ 0 & -1 & 1 \\ 0 & -1 & 1 \end{pmatrix} \ot 
	\begin{pmatrix} 1 & 0 & -1 \\ -1 & -1 & 1 \\ 0 & -1 & 0 \end{pmatrix} \ot 
	\begin{pmatrix} 0 & 0 & 0 \\ 0 & 0 & 0 \\ 0 & -1 & 0 \end{pmatrix} \\ 
&+ \label{addtlone_c8} \BZ_3^{std} \cdot 
	\begin{pmatrix} 0 & 0 & 0 \\ 1 & 1 & -1 \\ 1 & 1 & -1 \end{pmatrix} \ot 
	\begin{pmatrix} -1 & 0 & 1 \\ 0 & 0 & 0 \\ 0 & 0 & 0 \end{pmatrix} \ot 
	\begin{pmatrix} 0 & 0 & 0 \\ 0 & 0 & 1 \\ 0 & 0 & 1 \end{pmatrix} \\ 
&+ \label{addtlone_c9} \BZ_3^{std} \cdot 
	\begin{pmatrix} 0 & 0 & 0 \\ 0 & 0 & 1 \\ 0 & -1 & 1 \end{pmatrix} \ot 
	\begin{pmatrix} 0 & 0 & 0 \\ 1 & 1 & -1 \\ 0 & 1 & -1 \end{pmatrix} \ot 
	\begin{pmatrix} 1 & 0 & -1 \\ 0 & 0 & 0 \\ 0 & -1 & 0 \end{pmatrix} \\ 
&+ \label{addtlone_c10} \BZ_3^{std} \cdot 
	\begin{pmatrix} 0 & 0 & 0 \\ 0 & 0 & 0 \\ 1 & 0 & 0 \end{pmatrix} \ot 
	\begin{pmatrix} 1 & 0 & 0 \\ 0 & 0 & 0 \\ 0 & 0 & 0 \end{pmatrix} \ot 
	\begin{pmatrix} 0 & 0 & 1 \\ 0 & 0 & 1 \\ 0 & 0 & 1 \end{pmatrix} \\ 
&+ \label{addtlone_c11} \BZ_3^{std} \cdot 
	\begin{pmatrix} 0 & -1 & 0 \\ 0 & 0 & 0 \\ 0 & -1 & 0 \end{pmatrix} \ot 
	\begin{pmatrix} 1 & -1 & 0 \\ 1 & -1 & 0 \\ 0 & -1 & 0 \end{pmatrix} \ot 
	\begin{pmatrix} 0 & 0 & 0 \\ 1 & 0 & 0 \\ 0 & 0 & 0 \end{pmatrix} 
\end{align}
\end{tiny}

\begin{figure}
\begin{center}
\scalebox{.8}{\begin{tikzpicture}[node distance=3cm and 1cm, auto]

	\node[main node, label={\small 4}] (1u) {$\begin{matrix} 0 & 0 & 1 \end{matrix}$};
	\node[main node, label={\small 4}] (2u) [right=of 1u] {$\begin{matrix} 0 & 1 & 1 \end{matrix}$};
	\node[main node, label={\small 3}] (3u) [right=of 2u] {$\begin{matrix} 1 & 0 & 0 \end{matrix}$};
	\node[main node, label={\small 2}] (4u) [right=of 3u] {$\begin{matrix} 0 & 1 & 0 \end{matrix}$};
	\node[main node, label={\small 1}] (5u) [right=of 4u] {$\begin{matrix} 1 & 1 & 0 \end{matrix}$};
	\node[main node, label={\small 1}] (6u) [right=of 5u] {$\begin{matrix} 1 & 0 & 1 \end{matrix}$};
	\node[main node, label={\small 1}] (7u) [right=of 6u] {$\begin{matrix} 1 & 1 & 1 \end{matrix}$};

	\node[main node, label=below:{\small 5}] (1v) [below=of 1u] {$\begin{matrix} 1 & 0 & 0 \end{matrix}$};
	\node[main node, label=below:{\small 3}] (2v) [below=of 2u] {$\begin{matrix} 0 & 1 & 0 \end{matrix}$};
	\node[main node, label=below:{\small 3}] (3v) [below=of 3u] {$\begin{matrix} 0 & 0 & 1 \end{matrix}$};
	\node[main node, label=below:{\small 3}] (4v) [below=of 4u] {$\begin{matrix} 0 & 1 & -1 \end{matrix}$};
	\node[main node, label=below:{\small 1}] (5v) [below=of 5u] {$\begin{matrix} 1 & 0 & -1 \end{matrix}$};
	\node[main node, label=below:{\small 1}] (6v) [below=of 6u] {$\begin{matrix} 1 & 1 & -1 \end{matrix}$};

	\path[line]
	(1u)
		edge node {} (1v)
		edge node {} (2v)
	(2u)
		edge node {} (1v)
		edge node {} (4v)
		edge node {} (6v)
	(3u)
		edge node {} (2v)
		edge node {} (3v)
		edge node {} (4v)
	(4u)
		edge node {} (1v)
		edge node {} (3v)
		edge node {} (5v)
	(5u)
		edge node {} (3v)
	(6u)
		edge node {} (2v)
		edge node {} (5v)
		edge node {} (6v)
	(7u)
		edge node {} (4v)
		edge node {} (5v)
	;

\end{tikzpicture}}
\end{center}
\caption{Incidence graph of \addtlone}
\label{fig:addtlone_inc}
\end{figure}

\begin{figure}
\begin{center}
\scalebox{.8}{\begin{tikzpicture}[node distance=3cm and 1cm, auto]

	\node[main node, label={\small 4}] (1u) {$\begin{matrix} 0 & 0 & 1 \end{matrix}$};
	\node[main node, label={\small 4}] (2u) [right=of 1u] {$\begin{matrix} 0 & 1 & 1 \end{matrix}$};
	\node[main node, label={\small 3}] (3u) [right=of 2u] {$\begin{matrix} 1 & 0 & 0 \end{matrix}$};
	\node[main node, label={\small 2}] (4u) [right=of 3u] {$\begin{matrix} 0 & 1 & 0 \end{matrix}$};
	\node[main node, label={\small 1}] (5u) [right=of 4u] {$\begin{matrix} 1 & 1 & 0 \end{matrix}$};
	\node[main node, label={\small 1}] (6u) [right=of 5u] {$\begin{matrix} 1 & 0 & 1 \end{matrix}$};
	\node[main node, label={\small 1}] (7u) [right=of 6u] {$\begin{matrix} 1 & 1 & 1 \end{matrix}$};

	\node[main node, label=below:{\small 5}] (1v) [below=of 1u] {$\begin{matrix} 1 & 0 & 0 \end{matrix}$};
	\node[main node, label=below:{\small 3}] (2v) [below=of 2u] {$\begin{matrix} 0 & 1 & 0 \end{matrix}$};
	\node[main node, label=below:{\small 3}] (3v) [below=of 3u] {$\begin{matrix} 0 & 0 & 1 \end{matrix}$};
	\node[main node, label=below:{\small 3}] (4v) [below=of 4u] {$\begin{matrix} 0 & 1 & -1 \end{matrix}$};
	\node[main node, label=below:{\small 1}] (5v) [below=of 5u] {$\begin{matrix} 1 & 0 & -1 \end{matrix}$};
	\node[main node, label=below:{\small 1}] (6v) [below=of 6u] {$\begin{matrix} 1 & 1 & -1 \end{matrix}$};

	\path[line]
	(1u) edge [dashed,black,] node {} (4v)
	(5u) edge [dashed,black,] node {} (1v)
	(2u) edge [dashed,black,] node {} (2v)
	(4u) edge [blue,] node {} (3v)
	(1u) edge [blue,] node {} (1v)
	(3u) edge [blue,] node {} (4v)
	(2u) edge [red,] node {} (4v)
	(1u) edge [red,] node {} (2v)
	(2u) edge [green,] node {} (6v)
	(3u) edge [green,] node {} (5v)
	(2u) edge [green,] node {} (3v)
	(1u) edge [purple,bend left=8] node {} (1v)
	(3u) edge [purple,] node {} (1v)
	(7u) edge [purple,] node {} (3v)
	(6u) edge [orange,] node {} (2v)
	(4u) edge [orange,] node {} (1v)
	;

\end{tikzpicture}}
\end{center}
\caption{Pairing graph of \addtlone}
\label{fig:addtlone_pair}
\end{figure}

\begin{table}
\begin{tabular}{|c|c|c|}
\hline
 \addtlone & Char.~Poly. & Count \\
 \hline
\multirow{2}{*}{symmetric}
 & $t^2(t - 1)$ & 4 \\
 & $t + t^2 + t^3$ & 1 \\
\hline\multirow{4}{*}{triples}
 & $\left\{ t^3,t^3,t^3 \right\}$ & 3 \\
 & $\left\{ t^2(t - 1),t^2(t + 1),t^3 \right\}$ & 1 \\
 & $\left\{ t^2(t - 1),t^3,t - t^2 + t^3 \right\}$ & 1 \\
 & $\left\{ t^2(t - 1),t^2(t - 1),t^3 \right\}$ & 1 \\
\hline
\end{tabular}
\caption{Characteristic polynomials of matrices appearing in \addtlone}
\end{table}

\subsection{Another decomposition with five $\BZ_3$-fixed points}\ 

\begin{tiny}
\begin{align}
\Mthree
&= \label{addtltwo_c1} \begin{pmatrix} 0 & 1 & -1 \\ 0 & 1 & 0 \\ 0 & 0 & 0 \end{pmatrix}^{\ot 3} 
	+ \begin{pmatrix} 0 & 0 & 0 \\ -1 & 1 & 1 \\ 0 & 0 & 0 \end{pmatrix}^{\ot 3} 
	+ \begin{pmatrix} 1 & 0 & 0 \\ 0 & 0 & 0 \\ 0 & 0 & 0 \end{pmatrix}^{\ot 3} \\
&+ \label{addtltwo_c4} \begin{pmatrix} 0 & -1 & 1 \\ 1 & -1 & -1 \\ 0 & 0 & 0 \end{pmatrix}^{\ot 3} 
	+ \begin{pmatrix} 0 & 0 & 0 \\ 0 & 0 & 0 \\ 0 & 0 & 1 \end{pmatrix}^{\ot 3} \\
&+ \label{addtltwo_c6} \BZ_3^{std} \cdot 
	\begin{pmatrix} -1 & 0 & 1 \\ 0 & 0 & 0 \\ -1 & 0 & 1 \end{pmatrix} \ot 
	\begin{pmatrix} 0 & 0 & 0 \\ 0 & 1 & 0 \\ 0 & 1 & 0 \end{pmatrix} \ot 
	\begin{pmatrix} 0 & 0 & 0 \\ 0 & 0 & 1 \\ 0 & 0 & 0 \end{pmatrix} \\ 
&+ \label{addtltwo_c7} \BZ_3^{std} \cdot 
	\begin{pmatrix} 0 & -1 & 1 \\ 0 & 0 & 0 \\ 0 & 0 & 0 \end{pmatrix} \ot 
	\begin{pmatrix} 0 & 0 & 0 \\ -1 & 0 & 1 \\ 0 & 0 & 0 \end{pmatrix} \ot 
	\begin{pmatrix} 1 & -1 & 0 \\ 1 & -1 & -1 \\ 0 & 0 & 0 \end{pmatrix} \\ 
&+ \label{addtltwo_c8} \BZ_3^{std} \cdot 
	\begin{pmatrix} 0 & 0 & 0 \\ 1 & 0 & 0 \\ 1 & 0 & 0 \end{pmatrix} \ot 
	\begin{pmatrix} -1 & 0 & 0 \\ 0 & 0 & 0 \\ -1 & 0 & 0 \end{pmatrix} \ot 
	\begin{pmatrix} 0 & -1 & 0 \\ 0 & -1 & 0 \\ 0 & -1 & 0 \end{pmatrix} \\ 
&+ \label{addtltwo_c9} \BZ_3^{std} \cdot 
	\begin{pmatrix} 0 & 0 & 0 \\ 0 & 0 & -1 \\ 0 & 0 & 0 \end{pmatrix} \ot 
	\begin{pmatrix} 0 & 1 & -1 \\ 0 & 0 & 0 \\ 0 & 1 & -1 \end{pmatrix} \ot 
	\begin{pmatrix} 0 & 0 & 0 \\ 0 & -1 & 0 \\ 0 & 0 & 0 \end{pmatrix} \\ 
&+ \label{addtltwo_c10} \BZ_3^{std} \cdot 
	\begin{pmatrix} 1 & 0 & -1 \\ 0 & 0 & 0 \\ 1 & 0 & 0 \end{pmatrix} \ot 
	\begin{pmatrix} 0 & 1 & -1 \\ 0 & 1 & 0 \\ 0 & 1 & 0 \end{pmatrix} \ot 
	\begin{pmatrix} 0 & 0 & 0 \\ -1 & 0 & 1 \\ -1 & 0 & 0 \end{pmatrix} \\ 
&+ \label{addtltwo_c11} \BZ_3^{std} \cdot 
	\begin{pmatrix} 0 & -1 & 1 \\ -1 & -1 & 1 \\ -1 & -1 & 1 \end{pmatrix} \ot 
	\begin{pmatrix} 0 & 0 & 0 \\ 0 & 0 & 0 \\ -1 & 0 & 0 \end{pmatrix} \ot 
	\begin{pmatrix} 0 & 0 & -1 \\ 0 & 0 & 0 \\ 0 & 0 & 0 \end{pmatrix} 
\end{align}
\end{tiny}

\begin{figure}
\begin{center}
\scalebox{.8}{\begin{tikzpicture}[node distance=3cm and 1cm, auto]

	\node[main node, label={\small 5}] (1u) {$\begin{matrix} 0 & 1 & 0 \end{matrix}$};
	\node[main node, label={\small 3}] (2u) [right=of 1u] {$\begin{matrix} 1 & 0 & 0 \end{matrix}$};
	\node[main node, label={\small 3}] (3u) [right=of 2u] {$\begin{matrix} 1 & 0 & 1 \end{matrix}$};
	\node[main node, label={\small 2}] (4u) [right=of 3u] {$\begin{matrix} 0 & 0 & 1 \end{matrix}$};
	\node[main node, label={\small 2}] (5u) [right=of 4u] {$\begin{matrix} 0 & 1 & 1 \end{matrix}$};
	\node[main node, label={\small 1}] (6u) [right=of 5u] {$\begin{matrix} 1 & 1 & 1 \end{matrix}$};

	\node[main node, label=below:{\small 4}] (1v) [below=of 1u] {$\begin{matrix} 1 & 0 & 0 \end{matrix}$};
	\node[main node, label=below:{\small 4}] (2v) [below=of 2u] {$\begin{matrix} 0 & 0 & 1 \end{matrix}$};
	\node[main node, label=below:{\small 3}] (3v) [below=of 3u] {$\begin{matrix} 0 & 1 & 0 \end{matrix}$};
	\node[main node, label=below:{\small 2}] (4v) [below=of 4u] {$\begin{matrix} 1 & 0 & -1 \end{matrix}$};
	\node[main node, label=below:{\small 2}] (5v) [below=of 5u] {$\begin{matrix} 0 & 1 & -1 \end{matrix}$};
	\node[main node, label=below:{\small 1}] (6v) [below=of 6u] {$\begin{matrix} -1 & 1 & 1 \end{matrix}$};

	\path[line]
	(1u)
		edge node {} (1v)
		edge node {} (2v)
		edge node {} (4v)
	(2u)
		edge node {} (2v)
		edge node {} (3v)
		edge node {} (5v)
	(3u)
		edge node {} (3v)
		edge node {} (4v)
		edge node {} (6v)
	(4u)
		edge node {} (1v)
		edge node {} (3v)
	(5u)
		edge node {} (1v)
		edge node {} (5v)
	(6u)
		edge node {} (4v)
		edge node {} (5v)
	;

\end{tikzpicture}}
\end{center}
\caption{Incidence graph of \addtltwo}
\label{fig:addtltwo_inc}
\end{figure}

\begin{figure}
\begin{center}
\scalebox{.8}{\begin{tikzpicture}[node distance=3cm and 1cm, auto]

	\node[main node, label={\small 5}] (1u) {$\begin{matrix} 0 & 1 & 0 \end{matrix}$};
	\node[main node, label={\small 3}] (2u) [right=of 1u] {$\begin{matrix} 1 & 0 & 0 \end{matrix}$};
	\node[main node, label={\small 3}] (3u) [right=of 2u] {$\begin{matrix} 1 & 0 & 1 \end{matrix}$};
	\node[main node, label={\small 2}] (4u) [right=of 3u] {$\begin{matrix} 0 & 0 & 1 \end{matrix}$};
	\node[main node, label={\small 2}] (5u) [right=of 4u] {$\begin{matrix} 0 & 1 & 1 \end{matrix}$};
	\node[main node, label={\small 1}] (6u) [right=of 5u] {$\begin{matrix} 1 & 1 & 1 \end{matrix}$};

	\node[main node, label=below:{\small 4}] (1v) [below=of 1u] {$\begin{matrix} 1 & 0 & 0 \end{matrix}$};
	\node[main node, label=below:{\small 4}] (2v) [below=of 2u] {$\begin{matrix} 0 & 0 & 1 \end{matrix}$};
	\node[main node, label=below:{\small 3}] (3v) [below=of 3u] {$\begin{matrix} 0 & 1 & 0 \end{matrix}$};
	\node[main node, label=below:{\small 2}] (4v) [below=of 4u] {$\begin{matrix} 1 & 0 & -1 \end{matrix}$};
	\node[main node, label=below:{\small 2}] (5v) [below=of 5u] {$\begin{matrix} 0 & 1 & -1 \end{matrix}$};
	\node[main node, label=below:{\small 1}] (6v) [below=of 6u] {$\begin{matrix} -1 & 1 & 1 \end{matrix}$};

	\path[line]
	(1u) edge [dashed,black,] node {} (6v)
	(2u) edge [dashed,black,] node {} (1v)
	(4u) edge [dashed,black,] node {} (2v)
	(3u) edge [blue,] node {} (4v)
	(5u) edge [blue,] node {} (3v)
	(1u) edge [blue,] node {} (2v)
	(2u) edge [red,] node {} (5v)
	(1u) edge [red,] node {} (4v)
	(5u) edge [green,] node {} (1v)
	(3u) edge [green,] node {} (1v)
	(6u) edge [green,] node {} (3v)
	(1u) edge [olive,bend left=8] node {} (2v)
	(3u) edge [olive,] node {} (5v)
	(1u) edge [olive,] node {} (3v)
	(4u) edge [orange,] node {} (1v)
	(2u) edge [orange,] node {} (2v)
	;

\end{tikzpicture}}
\end{center}
\caption{Pairing graph of \addtltwo}
\label{fig:addtltwo_pair}
\end{figure}

\begin{table}
\begin{tabular}{|c|c|c|}
\hline
 \addtltwo & Char.~Poly. & Count \\
 \hline
\multirow{2}{*}{symmetric}
 & $t^2(t - 1)$ & 4 \\
 & $t + t^2 + t^3$ & 1 \\
\hline\multirow{4}{*}{triples}
 & $\left\{ t^2(t - 1),t^3,t^3 \right\}$ & 1 \\
 & $\left\{ t^3,t^3,t^3 \right\}$ & 2 \\
 & $\left\{ t^2(t + 1),t^2(t + 1),t^3 \right\}$ & 2 \\
 & $\left\{ t^2(t - 1),t^3,t - t^2 + t^3 \right\}$ & 1 \\
\hline
\end{tabular}
\caption{Characteristic polynomials of matrices appearing in \addtltwo}
\end{table}

\subsection{Another decomposition with $2$ $\BZ_3$-fixed points} \ 

\begin{tiny}
\begin{align}
\Mthree
&= \label{addtlthree_c1} \begin{pmatrix} 0 & 0 & 0 \\ 0 & 1 & 0 \\ 0 & 0 & 1 \end{pmatrix}^{\ot 3} 
	+ \begin{pmatrix} 1 & 0 & 0 \\ 0 & 0 & 0 \\ 0 & 0 & 0 \end{pmatrix}^{\ot 3} \\
&+ \label{addtlthree_c3} \BZ_3^{std} \cdot 
	\begin{pmatrix} 0 & 0 & 0 \\ 0 & 0 & 0 \\ 0 & 0 & 1 \end{pmatrix} \ot 
	\begin{pmatrix} 0 & 0 & 0 \\ 0 & -1 & 0 \\ 0 & 1 & 0 \end{pmatrix} \ot 
	\begin{pmatrix} 0 & 0 & 0 \\ 0 & 1 & 1 \\ 0 & 0 & 0 \end{pmatrix} \\ 
&+ \label{addtlthree_c4} \BZ_3^{std} \cdot 
	\begin{pmatrix} 0 & 0 & 0 \\ 1 & 1 & 0 \\ 0 & 0 & 0 \end{pmatrix} \ot 
	\begin{pmatrix} -1 & 0 & -1 \\ 0 & 0 & 0 \\ 0 & 0 & 0 \end{pmatrix} \ot 
	\begin{pmatrix} 0 & -1 & 0 \\ 0 & 0 & 0 \\ 0 & 0 & 0 \end{pmatrix} \\ 
&+ \label{addtlthree_c5} \BZ_3^{std} \cdot 
	\begin{pmatrix} 0 & 0 & 0 \\ 0 & 0 & 0 \\ -1 & 0 & 0 \end{pmatrix} \ot 
	\begin{pmatrix} 0 & 1 & 0 \\ 0 & 0 & 0 \\ 0 & 0 & 0 \end{pmatrix} \ot 
	\begin{pmatrix} 0 & 0 & 0 \\ 0 & -1 & -1 \\ 0 & 0 & 0 \end{pmatrix} \\ 
&+ \label{addtlthree_c6} \BZ_3^{std} \cdot 
	\begin{pmatrix} 0 & 0 & -1 \\ 0 & 0 & 0 \\ 0 & 0 & 0 \end{pmatrix} \ot 
	\begin{pmatrix} 0 & 1 & 0 \\ 0 & 0 & 0 \\ -1 & -1 & 0 \end{pmatrix} \ot 
	\begin{pmatrix} 0 & 0 & 0 \\ 1 & 0 & 0 \\ -1 & 0 & 0 \end{pmatrix} \\ 
&+ \label{addtlthree_c7} \BZ_3^{std} \cdot 
	\begin{pmatrix} 0 & 1 & 1 \\ 0 & 0 & 0 \\ 0 & -1 & -1 \end{pmatrix} \ot 
	\begin{pmatrix} 0 & 0 & 0 \\ 0 & 0 & 0 \\ 1 & 0 & 0 \end{pmatrix} \ot 
	\begin{pmatrix} 1 & 0 & 0 \\ -1 & 0 & 0 \\ 1 & 0 & 0 \end{pmatrix} \\ 
&+ \label{addtlthree_c8} \BZ_3^{std} \cdot 
	\begin{pmatrix} 0 & 0 & 0 \\ 0 & 0 & 0 \\ 0 & 1 & 1 \end{pmatrix} \ot 
	\begin{pmatrix} 0 & 0 & 0 \\ 0 & 1 & 0 \\ 0 & 0 & 0 \end{pmatrix} \ot 
	\begin{pmatrix} -1 & 0 & -1 \\ 1 & 0 & 1 \\ -1 & 0 & -1 \end{pmatrix} \\ 
&+ \label{addtlthree_c9} \BZ_3^{std} \cdot 
	\begin{pmatrix} 0 & 0 & 0 \\ 0 & -1 & 0 \\ -1 & 0 & 0 \end{pmatrix} \ot 
	\begin{pmatrix} 1 & 0 & 1 \\ -1 & 0 & 0 \\ 1 & 0 & 0 \end{pmatrix} \ot 
	\begin{pmatrix} 0 & 1 & 0 \\ 0 & 0 & 0 \\ 0 & -1 & -1 \end{pmatrix} 
\end{align}
\end{tiny}

\begin{figure}
\begin{center}
\scalebox{.8}{\begin{tikzpicture}[node distance=3cm and 1cm, auto]

	\node[main node, label={\small 5}] (1u) {$\begin{matrix} 1 & 0 & 0 \end{matrix}$};
	\node[main node, label={\small 4}] (2u) [right=of 1u] {$\begin{matrix} 0 & 1 & 0 \end{matrix}$};
	\node[main node, label={\small 4}] (3u) [right=of 2u] {$\begin{matrix} 0 & 0 & 1 \end{matrix}$};
	\node[main node, label={\small 2}] (4u) [right=of 3u] {$\begin{matrix} 0 & 1 & -1 \end{matrix}$};
	\node[main node, label={\small 2}] (5u) [right=of 4u] {$\begin{matrix} 1 & -1 & 1 \end{matrix}$};
	\node[main node, label={\small 1}] (6u) [right=of 5u] {$\begin{matrix} 1 & 0 & -1 \end{matrix}$};

	\node[main node, label=below:{\small 5}] (1v) [below=of 1u] {$\begin{matrix} 1 & 0 & 0 \end{matrix}$};
	\node[main node, label=below:{\small 4}] (2v) [below=of 2u] {$\begin{matrix} 0 & 1 & 0 \end{matrix}$};
	\node[main node, label=below:{\small 4}] (3v) [below=of 3u] {$\begin{matrix} 0 & 1 & 1 \end{matrix}$};
	\node[main node, label=below:{\small 2}] (4v) [below=of 4u] {$\begin{matrix} 0 & 0 & 1 \end{matrix}$};
	\node[main node, label=below:{\small 2}] (5v) [below=of 5u] {$\begin{matrix} 1 & 0 & 1 \end{matrix}$};
	\node[main node, label=below:{\small 1}] (6v) [below=of 6u] {$\begin{matrix} 1 & 1 & 0 \end{matrix}$};

	\path[line]
	(1u)
		edge node {} (2v)
		edge node {} (3v)
		edge node {} (4v)
	(2u)
		edge node {} (1v)
		edge node {} (4v)
		edge node {} (5v)
	(3u)
		edge node {} (1v)
		edge node {} (2v)
		edge node {} (6v)
	(4u)
		edge node {} (1v)
		edge node {} (3v)
	(5u)
		edge node {} (3v)
		edge node {} (6v)
	(6u)
		edge node {} (2v)
		edge node {} (5v)
	;

\end{tikzpicture}}
\end{center}
\caption{Incidence graph of \addtlthree}
\label{fig:addtlthree_inc}
\end{figure}

\begin{figure}
\begin{center}
\scalebox{.8}{\begin{tikzpicture}[node distance=3cm and 1cm, auto]

	\node[main node, label={\small 5}] (1u) {$\begin{matrix} 1 & 0 & 0 \end{matrix}$};
	\node[main node, label={\small 4}] (2u) [right=of 1u] {$\begin{matrix} 0 & 1 & 0 \end{matrix}$};
	\node[main node, label={\small 4}] (3u) [right=of 2u] {$\begin{matrix} 0 & 0 & 1 \end{matrix}$};
	\node[main node, label={\small 2}] (4u) [right=of 3u] {$\begin{matrix} 0 & 1 & -1 \end{matrix}$};
	\node[main node, label={\small 2}] (5u) [right=of 4u] {$\begin{matrix} 1 & -1 & 1 \end{matrix}$};
	\node[main node, label={\small 1}] (6u) [right=of 5u] {$\begin{matrix} 1 & 0 & -1 \end{matrix}$};

	\node[main node, label=below:{\small 5}] (1v) [below=of 1u] {$\begin{matrix} 1 & 0 & 0 \end{matrix}$};
	\node[main node, label=below:{\small 4}] (2v) [below=of 2u] {$\begin{matrix} 0 & 1 & 0 \end{matrix}$};
	\node[main node, label=below:{\small 4}] (3v) [below=of 3u] {$\begin{matrix} 0 & 1 & 1 \end{matrix}$};
	\node[main node, label=below:{\small 2}] (4v) [below=of 4u] {$\begin{matrix} 0 & 0 & 1 \end{matrix}$};
	\node[main node, label=below:{\small 2}] (5v) [below=of 5u] {$\begin{matrix} 1 & 0 & 1 \end{matrix}$};
	\node[main node, label=below:{\small 1}] (6v) [below=of 6u] {$\begin{matrix} 1 & 1 & 0 \end{matrix}$};

	\path[line]
	(1u) edge [dashed,black,] node {} (1v)
	(3u) edge [blue,] node {} (4v)
	(4u) edge [blue,] node {} (2v)
	(2u) edge [blue,] node {} (3v)
	(2u) edge [red,] node {} (6v)
	(1u) edge [red,] node {} (5v)
	(1u) edge [red,] node {} (2v)
	(3u) edge [green,] node {} (1v)
	(1u) edge [green,bend left=8] node {} (2v)
	(2u) edge [green,bend left=8] node {} (3v)
	(1u) edge [olive,] node {} (4v)
	(4u) edge [olive,] node {} (1v)
	(6u) edge [purple,] node {} (3v)
	(3u) edge [purple,bend left=8] node {} (1v)
	(5u) edge [purple,] node {} (1v)
	(3u) edge [orange,] node {} (3v)
	(2u) edge [orange,] node {} (2v)
	(5u) edge [orange,] node {} (5v)
	;

\end{tikzpicture}}
\end{center}
\caption{Pairing graph of \addtlthree}
\label{fig:addtlthree_pair}
\end{figure}

\begin{table}
\begin{tabular}{|c|c|c|}
\hline
 \addtlthree & Char.~Poly. & Count \\
 \hline
\multirow{2}{*}{symmetric}
 & $t(t - 1)^2$ & 1 \\
 & $t^2(t - 1)$ & 1 \\
\hline\multirow{6}{*}{triples}
 & $\left\{ t^2(t - 1),t^2(t - 1),t^2(t + 1) \right\}$ & 1 \\
 & $\left\{ t^2(t - 1),t^2(t + 1),t^3 \right\}$ & 2 \\
 & $\left\{ t^2(t + 1),t^3,t^3 \right\}$ & 1 \\
 & $\left\{ t^3,t^3,t^3 \right\}$ & 1 \\
 & $\left\{ t^2(t - 1),t^2(t - 1),2t^2 + t^3 \right\}$ & 1 \\
 & $\left\{ t^2(t + 1),t^2(t + 1),t^3 - t^2 - t \right\}$ & 1 \\
\hline
\end{tabular}
\caption{Characteristic polynomials of matrices appearing in \addtlthree}
\end{table}

\section{Triplets}

For the reader's convenience, we write out all the matrices appearing $\cS_{\BZ_4\times \BZ_3}$ and $\cS_{Lader-\BZ_3^{std}}$.

Matrix triplets for \zfzt:
\begin{tiny}
\begin{align}
\Mthree
&= \label{zfzt_c1} \begin{pmatrix} 0 & 0 & 1 \\ -1 & 0 & 1 \\ 0 & -1 & 1 \end{pmatrix}^{\ot 3} 
	+ \begin{pmatrix} 0 & 1 & 0 \\ 0 & 1 & 0 \\ 0 & 0 & 0 \end{pmatrix}^{\ot 3} 
	+ \begin{pmatrix} 0 & 0 & 0 \\ -1 & 0 & 1 \\ -1 & 0 & 1 \end{pmatrix}^{\ot 3} \\
&+ \label{zfzt_c4} \begin{pmatrix} 1 & 0 & 0 \\ 0 & 0 & 0 \\ 0 & 0 & 0 \end{pmatrix}^{\ot 3} 
	+ \begin{pmatrix} 0 & 0 & 0 \\ -1 & 1 & 0 \\ 0 & 0 & 0 \end{pmatrix}^{\ot 3} 
	+ \begin{pmatrix} 0 & 0 & 0 \\ 0 & 0 & 0 \\ 0 & -1 & 1 \end{pmatrix}^{\ot 3} \\
&+ \label{zfzt_c7} \begin{pmatrix} 0 & 0 & 1 \\ 0 & 0 & 1 \\ 0 & 0 & 1 \end{pmatrix}^{\ot 3} 
	+ \begin{pmatrix} 0 & -1 & 0 \\ 1 & -1 & 0 \\ 0 & 0 & 0 \end{pmatrix}^{\ot 3} 
	+ \begin{pmatrix} 0 & 0 & 0 \\ 1 & 0 & -1 \\ 0 & 1 & -1 \end{pmatrix}^{\ot 3} \\
&+ \label{zfzt_c10} \begin{pmatrix} 0 & 0 & -1 \\ 0 & 0 & -1 \\ 0 & 1 & -1 \end{pmatrix}^{\ot 3} 
	+ \begin{pmatrix} 0 & 0 & -1 \\ 1 & 0 & -1 \\ 1 & 0 & -1 \end{pmatrix}^{\ot 3} \\
&+ \label{zfzt_c12} \BZ_3^{std} \cdot 
	\begin{pmatrix} 0 & 0 & 0 \\ 0 & 0 & 1 \\ 0 & 0 & 0 \end{pmatrix} \ot 
	\begin{pmatrix} 0 & 0 & 0 \\ 0 & 0 & 0 \\ -1 & 1 & 0 \end{pmatrix} \ot 
	\begin{pmatrix} 0 & 0 & 0 \\ 0 & 1 & -1 \\ 0 & 1 & -1 \end{pmatrix} \\ 
&+ \label{zfzt_c13} \BZ_3^{std} \cdot 
	\begin{pmatrix} 0 & 0 & 0 \\ 0 & 0 & 0 \\ -1 & 0 & 0 \end{pmatrix} \ot 
	\begin{pmatrix} 0 & 1 & -1 \\ 0 & 1 & -1 \\ 0 & 1 & -1 \end{pmatrix} \ot 
	\begin{pmatrix} 0 & 0 & -1 \\ 0 & 0 & -1 \\ 0 & 0 & 0 \end{pmatrix} \\ 
&+ \label{zfzt_c14} \BZ_3^{std} \cdot 
	\begin{pmatrix} -1 & 1 & 0 \\ -1 & 1 & 0 \\ -1 & 1 & 0 \end{pmatrix} \ot 
	\begin{pmatrix} 0 & 0 & -1 \\ 0 & 0 & 0 \\ 0 & 0 & 0 \end{pmatrix} \ot 
	\begin{pmatrix} 0 & 0 & 0 \\ 1 & 0 & 0 \\ 1 & 0 & 0 \end{pmatrix} \\ 
&+ \label{zfzt_c15} \BZ_3^{std} \cdot 
	\begin{pmatrix} 0 & 1 & -1 \\ 0 & 0 & 0 \\ 0 & 0 & 0 \end{pmatrix} \ot 
	\begin{pmatrix} 0 & 0 & 0 \\ 1 & 0 & 0 \\ 0 & 0 & 0 \end{pmatrix} \ot 
	\begin{pmatrix} 1 & -1 & 0 \\ 1 & -1 & 0 \\ 0 & 0 & 0 \end{pmatrix} 
\end{align}

\end{tiny}

Matrix triplets for \lad:
\begin{tiny}
\begin{align}
\Mthree
&= \label{lad_c1} \begin{pmatrix} 0 & 0 & 0 \\ 0 & 1 & 0 \\ 0 & 0 & 0 \end{pmatrix}^{\ot 3} 
	+ \begin{pmatrix} 0 & 0 & 0 \\ 0 & 0 & 0 \\ 0 & 0 & 1 \end{pmatrix}^{\ot 3} 
	+ \begin{pmatrix} -1 & 1 & 0 \\ -1 & 0 & 0 \\ 0 & 0 & 0 \end{pmatrix}^{\ot 3} \\
&+ \label{lad_c4} \begin{pmatrix} 1 & 0 & 0 \\ 1 & 0 & 0 \\ 0 & 0 & 0 \end{pmatrix}^{\ot 3} 
	+ \begin{pmatrix} 1 & -1 & 0 \\ 0 & 0 & 0 \\ 0 & 0 & 0 \end{pmatrix}^{\ot 3} \\
&+ \label{lad_c6} \BZ_3^{std} \cdot 
	\begin{pmatrix} 1 & 0 & 0 \\ 0 & 0 & 0 \\ 0 & 0 & 0 \end{pmatrix} \ot 
	\begin{pmatrix} 0 & 0 & 1 \\ 0 & 0 & 0 \\ 0 & 0 & 0 \end{pmatrix} \ot 
	\begin{pmatrix} 0 & 0 & 0 \\ 0 & 0 & 0 \\ 1 & 0 & 0 \end{pmatrix} \\ 
&+ \label{lad_c7} \BZ_3^{std} \cdot 
	\begin{pmatrix} 0 & 0 & 0 \\ 0 & 0 & 1 \\ 0 & -1 & -1 \end{pmatrix} \ot 
	\begin{pmatrix} 0 & 0 & 0 \\ 1 & 0 & 0 \\ -1 & 1 & 0 \end{pmatrix} \ot 
	\begin{pmatrix} 0 & -1 & -1 \\ 0 & 0 & -1 \\ 0 & 0 & 0 \end{pmatrix} \\ 
&+ \label{lad_c8} \BZ_3^{std} \cdot 
	\begin{pmatrix} 0 & -1 & 0 \\ 0 & 0 & 0 \\ 0 & 0 & 0 \end{pmatrix} \ot 
	\begin{pmatrix} 1 & -1 & 0 \\ 1 & -1 & -1 \\ 0 & 1 & 1 \end{pmatrix} \ot 
	\begin{pmatrix} 0 & 0 & 0 \\ 1 & 0 & 0 \\ 0 & 0 & 0 \end{pmatrix} \\ 
&+ \label{lad_c9} \BZ_3^{std} \cdot 
	\begin{pmatrix} 0 & 0 & 0 \\ 0 & 0 & 0 \\ 0 & -1 & 0 \end{pmatrix} \ot 
	\begin{pmatrix} 0 & 1 & 1 \\ -1 & 1 & 1 \\ 1 & -1 & 0 \end{pmatrix} \ot 
	\begin{pmatrix} 0 & 0 & 0 \\ 0 & 0 & -1 \\ 0 & 0 & 0 \end{pmatrix} \\ 
&+ \label{lad_c10} \BZ_3^{std} \cdot 
	\begin{pmatrix} 0 & 0 & 0 \\ 0 & 0 & -1 \\ 0 & 0 & 1 \end{pmatrix} \ot 
	\begin{pmatrix} 0 & 0 & 0 \\ -1 & 0 & 0 \\ 1 & 0 & 0 \end{pmatrix} \ot 
	\begin{pmatrix} 0 & 0 & 1 \\ 0 & 0 & 1 \\ 0 & 0 & 0 \end{pmatrix} \\ 
&+ \label{lad_c11} \BZ_3^{std} \cdot 
	\begin{pmatrix} 0 & 0 & 0 \\ 0 & 0 & 0 \\ 0 & 1 & 1 \end{pmatrix} \ot 
	\begin{pmatrix} 0 & 0 & 0 \\ 0 & 0 & 0 \\ 1 & -1 & 0 \end{pmatrix} \ot 
	\begin{pmatrix} 0 & 1 & 1 \\ 0 & 0 & 0 \\ 0 & 0 & 0 \end{pmatrix} 
\end{align}

\end{tiny}

\end{document}